\newtheorem{@problem}{Problem}[section]
 \newenvironment{problem}{\begin{@problem}}{\end{@problem}}
\begin{document}

\title{\Large A $\frac{5}{2}$-Approximation Algorithm for Coloring Rooted Subtrees of a Degree $3$ Tree}
\author{Anuj Rawat\thanks{Intel, Hillsboro, OR USA.}}
\date{}

\maketitle

\begin{abstract}
A rooted tree $\vec{R}$ is a rooted subtree of a tree $T$ if the tree obtained
by replacing the directed edges of $\vec{R}$ by undirected edges is a subtree
of $T$. We study the problem of assigning minimum number of colors to a given
set of rooted subtrees $\mathcal{R}$ of a given tree $T$ such that if any two
rooted subtrees share a directed edge, then they are assigned different colors.
The problem is NP hard even in the case when the degree of $T$ is restricted to
$3$ \cite{erlebach_hicss97}. We present a $\frac{5}{2}$-approximation algorithm
for this problem. The motivation for studying this problem stems from the
problem of assigning wavelengths to multicast traffic requests in all-optical
WDM tree networks.
\end{abstract}

\section{Introduction}
\label{sec:introduction}
\subsection{Motivation}
\label{subsec:motivation}
In Wavelength Division Multiplexing (WDM) \cite{garcia_book00} (p.208-211)
multiple signals are transmitted simultaneously over a single optical fiber by
using a different wavelength of light for each signal. The extremely high data
transfer rate achievable by WDM, along with the low bit error rate have made it
very attractive for backbone networks. Electronic switching becomes
prohibitively expensive at this high data rate and hence switching is typically
carried out in optical domain with the move between optical and electronic
domains restricted to the source and the destination nodes. This
scenario, in which a single \emph{lightpath} is constructed between the source
and the destination, is called \emph{transparent} or \emph{all-optical}
networking. In absence of wavelength converters, which is usually the case due
to their high cost, a lightpath must use the same wavelength on every fiber link on
which it exists. This is called the \emph{wavelength continuity constraint}.
Also, if two lightpaths share a fiber link (in the same direction), then they
must be assigned different wavelengths. In case of multicast traffic requests
(single source-multiple destinations), in order to maintain transparent optics,
network nodes capable of performing light splitting \cite{mukherjee_comm99}
and tap-and-continue operations \cite{deogun_jlt00} are employed. A single
\emph{light tree} is constructed from the source to the corresponding set of
destinations to support a multicast request. The light is split and sent onto
multiple fiber links on the nodes where bifurcation is required. On the intermediate
nodes that are also in the destination set, a small amount of light is tapped
and used to retrieve the data, while the rest of the light is allowed to travel
through. The wavelength continuity constraint requires that the light tree use
the same wavelength on every fiber link on which it exists. In case when
the underlying fiber network is a tree, the routing of the light trees
corresponding to the traffic requests is fixed and the given traffic requests
can be treated as rooted subtrees of the underlying fiber tree. So the problem
reduces to assigning a minimum number of wavelengths to these rooted subtrees
such that any two rooted subtrees sharing a directed edge are assigned
different wavelengths.

\subsection{Notations and Definitions}
\label{subsec:notation}
$|S|$ denotes the cardinality of a finite set $S$. For real valued $x$,
$[x]^+$ denotes $\max\{x,0\}$. $f(S)$ denotes the image of mapping
$f:D\longrightarrow R$ restricted to set $S\subseteq D$.

Unless otherwise stated, all graphs are assumed to be simple. For graph $G$,
$E_G$ and $V_G$ denote the edge set and
the vertex set, respectively. An edge between vertices $u,v\in V_G$ is denoted
by the set $\{u,v\}$. Similarly, for a directed graph $\vec{G}$,
$E_{\vec{G}}$ and $V_{\vec{G}}$ denote the set of directed edges and vertices, respectively.
For a pair of vertices $u,v\in V_{\vec{G}}$, a directed edge from $u$ to $v$ is
denoted by the ordered pair $(u,v)$.
$\bar{G}$ denotes the
complement of graph $G$. $G[W]$ denotes the subgraph of graph $G$ induced by
vertex set $W\subseteq V_G$.

The undirected multigraph obtained by replacing all the directed edges of
directed graph $\vec{G}$ by undirected edges is referred to as the
\emph{skeleton} of $\vec{G}$. A directed graph $\vec{R}$ is a \emph{rooted
tree} if (i) its skeleton is a tree, (ii) there is a unique vertex
$r\in V_{\vec{R}}$ with in-degree $0$, and (iii) every other vertex has
in-degree $1$.
A directed graph $\vec{R}$ is a \emph{rooted subtree} of tree $T$ if
$\vec{R}$ is a rooted tree and its skeleton is a subtree of $T$.
In this case, we also refer to $T$ as the \emph{host tree} of $\vec{R}$. Let
$\mathcal{R}$ be a multiset\footnote{For ease of exposition, in this paper we
use the term set even though the object being referred to might be a
multiset.} of rooted subtrees of tree $T$. We denote the set of all the rooted
subtrees in $\mathcal{R}$ that contain directed edge $(u,v)$ by
$\mathcal{R}[(u,v)]$. If a rooted subtree $\vec{R}$ contains directed edge
$(u,v)$, we say that it is \emph{present} on the directed edge $(u,v)$.
Moreover, the set $\mathcal{R}$ of rooted subtrees of the directed graph
$\vec{G}$ \emph{collide} on directed edge $(u,v)$, if for every rooted subtree
$\vec{R}\in\mathcal{R}$, $(u,v)\in E_{\vec{R}}$. If the directed edge on which
the collision occurs is not important for the subsequent discussion, we simply
say that the set of rooted subtrees collide. With a slight abuse of notation,
we denote the set of all rooted subtrees in $\mathcal{R}$ that contain either
directed edge $(u,v)$ or $(v,u)$ by $\mathcal{R}[\{u,v\}]$. The \emph{load} of
a set $\mathcal{R}$ of rooted subtrees on a tree $T$ is defined to be the
maximum number of rooted subtrees in that set that share a directed edge and is
denoted by $l_{\mathcal{R}}$.

Let $\mathbb{N}$ denote the set of natural numbers. A valid \emph{coloring} of
a given set of rooted subtrees $\mathcal{R}$ of a tree $T$ is a map
$\psi:\mathcal{R}\longrightarrow\mathbb{N}$ such that for any pair of rooted
subtrees $\vec{R}_i,\vec{R}_j\in \mathcal{R}$ that collide,
$\psi(\vec{R}_i)\neq\psi(\vec{R}_j)$. We denote the set of all valid colorings
by $\Psi_{\mathcal{R}}$. We can create a \emph{conflict graph} for a
given set of rooted subtrees $\mathcal{R}$ of a tree $T$ where the vertices
represent the rooted subtrees and there is an edge between two vertices if the
corresponding rooted subtrees collide. We denote this conflict graph by
$G_{\mathcal{R}}$. Note that coloring rooted subtrees $\mathcal{R}$ is
equivalent to coloring the vertices of $G_{\mathcal{R}}$.

\subsection{Problem Statement}
\label{subsec:problem}
The coloring problem that we are interested in is stated as Problem
\ref{prob:problem} below.
\begin{problem}
\label{prob:problem}
Given a set of rooted subtrees $\mathcal{R}$ of a tree $T$ with degree at most
$3$, find a coloring $\psi\in\Psi_{\mathcal{R}}$ that minimizes the number of colors used.
\end{problem}
Note that assigning wavelengths to a set of multicast traffic requests in an
all-optical WDM network where the underlying fiber topology is a tree $T$ is
exactly equivalent to determining the coloring for the corresponding set of
rooted subtrees $\mathcal{R}$.

\subsection{Related Work}
\label{subsec:related_work}
The work that is most closely related to the problem of coloring a given set of
rooted subtrees of a tree, consists of the following:
\begin{itemize}
\item Coloring a given set of undirected paths on a tree.
\item Coloring a given set of directed paths on a tree.
\item Coloring and characterization of a given set of subtrees of a tree.
\end{itemize}
Our contribution can be seen as the next logical step in this series of works.

In \cite{golumbic_jct85}, Golumbic et al. proved that determining a minimum
coloring for a given set of undirected paths on a tree is NP hard in general.
They showed that undirected path coloring in stars is equivalent to edge
coloring in multigraphs. Since edge coloring is NP hard \cite{holyer_sicomp81},
undirected path coloring in stars is also NP hard. In fact, as observed in
\cite{erlebach_tcs01}, this equivalence result has several important
implications:
\begin{itemize}
\item Undirected path coloring is solvable in polynomial time in bounded degree
trees.
\item Undirected path coloring is NP hard for trees of arbitrary degrees (even
with diameter 2).
\item Any approximation algorithm for edge coloring in multigraphs can be
modified into an approximation algorithm for undirected path coloring in
trees and vice versa with the same approximation ratio.
\item Approximating undirected path coloring in trees of arbitrary degree with
an approximation ratio $\frac{4}{3}-\epsilon$ for any $\epsilon>0$ is NP hard.
\end{itemize}
In \cite{tarjan_dm85}, Tarjan introduced a $\frac{3}{2}$-approximation
algorithm for coloring a given set of undirected paths in a tree. Later, this
ratio was rediscovered by Raghavan and Upfal \cite{raghavan_stoc94} in the
context of optical networks. Mihail et al. \cite{mihail_focs95} presented a
coloring scheme with an asymptotic approximation ratio of $\frac{9}{8}$.
Nishizeki et al. \cite{nishizeki_sidma00} presented an algorithm for edge
coloring multigraphs with an asymptotic approximation ratio of $1.1$ and an
absolute approximation ratio of $\frac{4}{3}$. This improves the asymptotic and
the absolute approximation ratio of undirected path coloring in trees to $1.1$
and $\frac{4}{3}$ respectively.

In \cite{erlebach_pasa96}, Erlebach et al. proved that coloring a given set of
directed paths in trees is NP hard. The hardness result holds even when we
restrict instances to arbitrary trees and sets of directed paths of load $3$ or
to trees with arbitrary degree and depth $3$ \cite{kumar_ipl97}. For this
problem, Mihail et al. \cite{mihail_focs95} gave a $\frac{15}{8}$-approximation
algorithm. This ratio was improved to $\frac{7}{4}$ in \cite{kaklamanis_esa96}
and \cite{kumar_soda97}, and finally to $\frac{5}{3}$ in
\cite{kaklamanis_icalp97}. All these are greedy, deterministic algorithms and
use the load of the given set of directed paths as the lower bound on the
number of colors required. In \cite{kaklamanis_icalp97}, Kaklamanis et al. also
proved that no greedy, deterministic algorithm can achieve a better
approximation ratio than $\frac{5}{3}$.

Unlike its undirected counterpart, Erlebach et al. \cite{erlebach_hicss97}
proved by a reduction from circular arc coloring that the problem of coloring
directed paths is NP hard even in binary trees. This result also implies that
the problem that we are interested in is also NP hard. In \cite{kumar_soda97},
Kumar et al. gave a problem instance where the given set of directed paths on a
binary tree of depth $3$ having load $l$ requires at least $\frac{5}{4}l$
colors. Caragiannis et al. \cite{caragiannis_wocs97} and Jansen
\cite{jansen_wocs97} gave algorithms for the directed path coloring
in binary trees having approximation ratio $\frac{5}{3}$ (same as for general trees). In
\cite{auletta_approx00}, Auletta et al. presented a randomized greedy algorithm
for coloring directed paths of maximum load $l$ in binary trees
of depth $O(l^{\frac{1}{3}-\epsilon})$ that uses at most $\frac{7}{5}l+o(l)$
colors. They also proved that with high probability, randomized greedy
algorithms cannot achieve an approximation ratio better than $\frac{3}{2}$ when
applied for binary trees of depth $\Omega(l)$, and $1.293-o(1)$ when applied for
binary trees of constant depth. Moreover, they proved an upper
bound of $\frac{7}{5}l+o(l)$ for all binary trees. In
\cite{erlebach_tcs01}, Erlebach et al. proved that approximating directed path
coloring in binary trees with an approximation ratio $\frac{4}{3}-\epsilon$ for
any $\epsilon>0$ is NP hard.

In \cite{jamison_dm05} Jamison et al. proved that the conflict graphs of
subtrees in a binary tree are chordal \cite{fulkerson_pjm65}, and therefore
easily colorable \cite{gavril_sicomp72}. In \cite{golumbic_eurocomb05} Golumbic
et al. proved that the conflict graphs (obtained as described above) of
undirected paths on degree $4$ trees are weakly chordal \cite{hayward_jct85},
therefore coloring them is easy \cite{hayward_talg07}. Later, in
\cite{golumbic_wg06}, they extended the result to the conflict graph of
subtrees on degree $4$ trees.

For an extensive compilation of complexity results on both directed and
undirected paths in trees from the perspective of optical networks, the reader
is referred to \cite{kumar_ipl97} and \cite{erlebach_tcs01}. And for a survey
of algorithmic results, the reader is referred to \cite{caragiannis_ci01},
\cite{caragiannis_stacs04} and \cite{caragiannis_chap06}.

Ours is the first work to study the problem of coloring rooted subtrees of a
tree. As stated previously, this can be seen as the
directed counterpart of the problem of coloring subtrees of a tree.

\section{Algorithm}
\label{sec:greedy}
In this section, we present a greedy coloring scheme for Problem
\ref{prob:problem}. The algorithm is presented as Algorithm
\ref{algo:greedycolor} (GREEDY-COL). We denote the coloring generated by this
algorithm as $\psi^{\mathrm{GDY}}$. The algorithm proceeds in \emph{rounds}. In
each round we select and \emph{process} a host tree edge which has not been
selected in any of the previous rounds. Processing a host tree edge means
coloring all the uncolored rooted subtrees present on that edge.

\begin{footnotesize}
\begin{algorithm}
\footnotesize
\caption{GREEDY-COL}
\label{algo:greedycolor}
\begin{algorithmic}[1]
\REQUIRE Set of rooted subtrees $\mathcal{R}$ on host tree $T$ with degree at
most $3$.

\ENSURE A coloring $\psi^{\mathrm{GDY}}\in\Psi_{\mathcal{R}}$.

\STATE Perform BFS on host tree $T$ starting with arbitrary vertex as the
root and enumerate tree edges in the order of their discovery. Let
$\{e_1,\dots,e_{|E_T|}\}$ be the ordered set of edges $E_T$.

\STATE $\mathcal{P}_0\leftarrow\emptyset$

\FOR{$i=1$ to $|E_T|$}

\STATE
$\mathcal{Q}_i\leftarrow\mathcal{R}[e_i]\setminus\mathcal{P}_{i-1}$\\

\IF{edge $e_i=\{u,v\}$ is of type $4$ as defined in Lemma \ref{lem:analysis1}}

\STATE Let
$\psi_1,\psi_2\in\Psi_{\mathcal{Q}_i\cup\mathcal{P}_{i-1}}$

\STATE
$\psi_1(\vec{R}),\psi_2(\vec{R})\leftarrow\psi^{\mathrm{GDY}}(\vec{R})$
for every $\vec{R}_j\in\mathcal{P}_{i-1}$ (uncolored otherwise).

\STATE
PROCESS-EDGE-1$(T,\{u,v\},\mathcal{P}_{i-1},\mathcal{Q}_i,\psi_1)$

\STATE
PROCESS-EDGE-2$(T,\{\{u,v\},\{u,w\},\{u,x\}\},\mathcal{P}_{i-1},\mathcal{Q}_i,\psi_2)$

\IF{$|\psi_1(\mathcal{P}_{i-1}\cup\mathcal{Q}_i)|\leq|\psi_2(\mathcal{P}_{i-1}\cup\mathcal{Q}_i)|$}\label{line:greedycolor_10}

\STATE
$\psi^{\mathrm{GDY}}(\vec{R})\leftarrow\psi_1(\vec{R})$
for every $\vec{R}\in\mathcal{Q}_i$

\ELSE

\STATE
$\psi^{\mathrm{GDY}}(\vec{R})\leftarrow\psi_2(\vec{R})$
for every $\vec{R}\in\mathcal{Q}_i$

\ENDIF

\ELSE

\WHILE{$\exists$ some uncolored $\vec{R}\in\mathcal{Q}_i$}

\STATE
$\psi^{\mathrm{GDY}}\!(\vec{R})\leftarrow\min\{l\in\mathbb{N}:
\nexists\,\vec{S}\!\in\!\mathcal{P}_{i-1}\!\cup\!\mathcal{Q}_i
\textrm{ s.t. } \vec{R},\vec{S} \textrm{ collide and }
\psi^{\mathrm{GDY}}\!(\vec{S})\!=\!l\}$\label{line:greedycolor_17}

\ENDWHILE

\ENDIF

\STATE $\mathcal{P}_i\leftarrow\mathcal{P}_{i-1}\cup\mathcal{Q}_i$

\ENDFOR
\end{algorithmic}
\end{algorithm}
\end{footnotesize}

\subsection{Edge Order}
\label{subsubsec:edgeorder}
We traverse the edges of the host tree in a breadth-first manner, i.e.,
starting with an arbitrary vertex as root, we perform a Breadth First Search
(BFS) on the host tree $T$ and rank its edges in the order of their discovery,
and then process the edges in this order.\footnote{Note that this edge ordering
is not unique.} Let us assume that the set of edges $E_H$ in the order of
enumeration is $\{e_1,\dots,e_{|E_H|}\}$. In the $i$-th round of GREEDY-COL,
edge $e_i$ is processed. GREEDY-COL involves exactly $|E_T|$ rounds of
coloring.\footnote{It may happen that in some rounds no rooted subtrees are
colored.}

\subsection{Coloring Strategy}
Let the set of rooted subtrees that are colored in the first $i$ rounds in
GREEDY-COL be $\mathcal{P}_i$. We define $\mathcal{P}_0$ to be empty. The set
of rooted subtrees present on edge $e_i$ but not in the set $\mathcal{P}_i$ is
denoted by $\mathcal{Q}_i$, i.e.,
$\mathcal{Q}_i=\mathcal{R}[e_i]\setminus\mathcal{P}_i$. Note that
$\mathcal{Q}_i$ is the set of rooted subtrees that are colored in the $i$-th
round of GREEDY-COL.

The basic idea is to be greedy in each round of coloring and try to use as few
new colors as possible while processing the edge.
The actual coloring scheme followed in the $i$-th round of GREEDY-COL depends
on the type of edge $e_i$ being processed. According to Lemma
\ref{lem:analysis1} below, tree edge $e_i$ encountered during the $i$-th round
of GREEDY-COL can be classified into one of the four types (defined in the
lemma) based on the status (whether already processed or not) of its adjacent
tree edges. If edge $e_i$ is of type $1$, $2$, or $3$ as defined in Lemma
\ref{lem:analysis1}, then uncolored rooted subtrees are randomly selected from
the set $\mathcal{Q}_i$ one at a time and are colored. In more detail, suppose
rooted subtree $\vec{R}$ has been selected from the set $\mathcal{Q}_i$ for
coloring. If there is a color that has already been assigned to some rooted
subtree(s) and can also be assigned to $\vec{R}$, then that color is assigned
to $\vec{R}$, otherwise a new color (not assigned to any other rooted subtree
previously) is assigned to $\vec{R}$.
On the other hand, if edge $e_i$ is of
type $4$ as defined in Lemma \ref{lem:analysis1}, then we assign colors to the
rooted subtrees in the set $\mathcal{Q}_i$ according to the better of the two
different coloring schemes presented as Subroutine \ref{algo:coloredge1}
(PROCESS-EDGE-1) and Subroutine \ref{algo:coloredge2} (PROCESS-EDGE-2).

\floatname{algorithm}{Subroutine}
\begin{footnotesize}
\begin{algorithm}
\footnotesize
\caption{PROCESS-EDGE-1}
\label{algo:coloredge1}
\begin{algorithmic}[1]
\REQUIRE
$\{T,\{u,v\}\in E_T,\mathcal{P},\mathcal{Q},\psi\}$ s.t. degree of tree
$T$ is at most $3$, $\mathcal{P}$ is set of rooted subtrees of $T$
that have already been colored according to
$\psi:\mathcal{P}\longrightarrow\mathbb{N}$ and $\mathcal{Q}$ is set of all
uncolored rooted subtrees of $T$ that are present on host tree edge
$\{u,v\}$.

\ENSURE Complete the given mapping $\psi$ to
$\psi:\mathcal{P}\cup\mathcal{Q}\longrightarrow\mathbb{N}$
s.t.
$\psi\in\Psi_{\mathcal{P}\cup\mathcal{Q}}$.

\STATE
$B_1\leftarrow G_{\mathcal{P}[\{u,v\}]\cup\mathcal{Q}}$\\

\FORALL{pairs
$\vec{R},\vec{S}\in\mathcal{P}[\{u,v\}]\cup\mathcal{Q}$ s.t.
$\vec{R},\vec{S}$ do not collide}\label{line:coloredge1_2}

\IF{any one of the following is true:
\begin{enumerate}
\item $\vec{R},\vec{S}\in\mathcal{P}$ and
$\psi(\vec{R})\neq\psi(\vec{S})$.
\item $\vec{R}\in\mathcal{Q},\vec{S}\in\mathcal{P}$ and
$\exists\,\vec{U}\in\mathcal{P}$ such that
$\psi(\vec{S})=\psi(\vec{U})$ and $\vec{R},\vec{U}$
collide.
\end{enumerate}}

\STATE $E_{B_1}\leftarrow E_{B_1}\cup\;\{\{\vec{R},\vec{S}\}\}$

\ENDIF

\ENDFOR\label{line:coloredge1_6}

\STATE Determine a maximum matching $M_{\bar{B}_1}\subseteq
E_{\bar{B}_1}$.\label{line:coloredge1_7}
\COMMENT{$\bar{B}_1$ is bipartite.}

\FORALL{matched edges $\{\vec{R},\vec{S}\}\in M_{\bar{B}_1}$
s.t. $\vec{R}\in\mathcal{Q}$ and $\vec{S}\in\mathcal{P}$}

\STATE $\psi(\vec{R})\leftarrow\psi(\vec{S})$\label{line:coloredge1_9}

\ENDFOR

\WHILE{$\exists$ some uncolored $\vec{R}\in\mathcal{Q}$}

\IF{$\exists$ matched edge
$\{\vec{R},\vec{S}\}\in M_{\bar{B}_1}$}

\STATE
$\psi(\vec{R}),\!\psi(\vec{S})\!\!\leftarrow\!\!\min\{m\!\in\!\mathbb{N}\!:\!\nexists\,\vec{U}\!\in\!\mathcal{P}\!\cup\!\mathcal{Q}
\textrm{ s.t. } \vec{R},\!\vec{U} \textrm{ or }
\vec{S},\!\vec{U} \textrm{ collide and } \psi(\vec{U})\!=\!m\}$\label{line:coloredge1_13}

\ELSE

\STATE
$\psi(\vec{R})\leftarrow\min\{m\in\mathbb{N}:\nexists\,\vec{U}\in\mathcal{P}\cup\mathcal{Q}
\textrm{ s.t. } \vec{R},\vec{U} \textrm{ collide and }
\psi(\vec{U})=m\}$

\ENDIF

\ENDWHILE

\end{algorithmic}
\end{algorithm}
\end{footnotesize}

As we shall see in Lemma \ref{lem:analysis1}, edge $e_i=\{u,v\}$ being a type
$4$ edge means that none of the tree edges adjacent to vertex $v$ have yet
been processed and there are two edges adjacent to vertex $u$ (besides edge
$e_i=\{u,v\}$), namely $\{u,w\}$ and $\{u,x\}$, of which edge $\{u,w\}$ has
already been processed and edge $\{u,x\}$ has not yet been processed. The two
coloring schemes employed while processing a type $4$ edge $e_i=\{u,v\}$
differ in the way they go about reusing the colors. In PROCESS-EDGE-1 we prefer
to reuse colors from the set
$\psi^{\mathrm{GDY}}(\mathcal{P}_{i-1}[\{u,v\}])$ (set of colors assigned to
the rooted subtree(s) present on host tree edge $e_i=\{u,v\}$ that were colored
in the first $i-1$ rounds), whereas in PROCESS-EDGE-2 we prefer to reuse colors
from the set
$\psi^{\mathrm{GDY}}(\mathcal{P}_{i-1}[\{u,x\}]\setminus\mathcal{P}_{i-1}[\{u,v\}])$
(set of colors assigned to the rooted subtree(s) present on host tree edge
$\{u,x\}$, but not on tree edge $e_i=\{u,v\}$, that were colored in the first
$i-1$ rounds). Note that the two sets of colors are not necessarily mutually
exclusive. The two schemes also differ in the order in which uncolored rooted
subtrees in the set $\mathcal{Q}_i$ are selected for coloring. More
specifically, in PROCESS-EDGE-2, first colors are assigned to all the rooted
subtrees in the set $\mathcal{Q}_i[\{u,x\}]$ and then to the rest of the
uncolored rooted subtrees.

\begin{footnotesize}
\begin{algorithm}
\footnotesize
\caption{PROCESS-EDGE-2}
\label{algo:coloredge2}
\begin{algorithmic}[1]
\REQUIRE
$\{T,\{\{u,v\},\{u,w\},\{u,x\}\}\subseteq E_T,\mathcal{P},\mathcal{Q},\psi\}$
s.t. degree of tree $T$ is $3$, $\mathcal{P}$ is set of rooted
subtrees of $T$ that have already been colored according to
$\psi:\mathcal{P}\longrightarrow\mathbb{N}$ and $\mathcal{Q}$ is set of all
uncolored rooted subtrees of $T$ that are present on host tree edge
$\{v,u\}$.

\ENSURE Complete the given mapping $\psi$ to
$\psi:\mathcal{P}\cup\mathcal{Q}\longrightarrow\mathbb{N}$ s.t.
$\psi\in\Psi_{\mathcal{P}\cup\mathcal{Q}}$.

\STATE
$B_2\leftarrow G_{\left(\mathcal{P}[\{u,x\}]\setminus\mathcal{P}[\{u,v\}]\right)\cup\mathcal{Q}[\{u,x\}]}$

\FORALL{pairs
$\vec{R},\vec{S}\!\in\!\left(\mathcal{P}[\{u,x\}]\!\setminus\!\mathcal{P}[\{u,v\}]\right)\cup\mathcal{Q}[\{u,x\}]$
s.t. $\vec{R},\vec{S}$ do not collide}

\IF{any one of the following is true:
\begin{enumerate}
\item $\vec{R},\vec{S}\in\mathcal{P}$ and
$\psi(\vec{R})\neq\psi(\vec{S})$.
\item $\vec{R}\in\mathcal{Q},\vec{S}\in\mathcal{P}$ and
$\exists\,\vec{U}\in\mathcal{P}$ s.t.
$\psi(\vec{S})=\psi(\vec{U})$ and $\vec{R},\vec{U}$
collide.
\end{enumerate}}

\STATE
$E_{B_2}\leftarrow E_{B_2}\cup\;\{\{\vec{R},\vec{S}\}\}$

\ENDIF

\ENDFOR

\STATE Determine a maximum matching
$M_{\bar{B}_2}\subseteq E_{\bar{B}_2}$.\label{line:coloredge2_7}
\COMMENT{$\bar{B}_2$ is bipartite.}

\FORALL{matched edges $\{\vec{R},\vec{S}\}\in M_{\bar{B}_2}$
s.t. $\vec{R}\in\mathcal{Q}$ and $\vec{S}\in\mathcal{P}$}

\STATE $\psi(\vec{R})\leftarrow\psi(\vec{S})$\label{line:coloredge2_9}

\ENDFOR

\WHILE{$\exists$ some uncolored $\vec{R}\in\mathcal{Q}[\{u,x\}]$}

\IF{$\exists$ matched edge
$\{\vec{R},\vec{S}\}\in M_{\bar{B}_2}$}

\STATE
$\psi(\vec{R}),\!\psi(\vec{S})\!\!\leftarrow\!\!\min\{m\!\in\!\mathbb{N}\!:\!\nexists\,\vec{U}\!\in\!\mathcal{P}\!\cup\!\mathcal{Q}
\textrm{ s.t. } \vec{R},\!\vec{U} \textrm{ or }
\vec{S},\!\vec{U} \textrm{ collide and } \psi(\vec{U})\!=\!m\}$\label{line:coloredge2_13}

\ELSE

\STATE
$\psi(\vec{R})\leftarrow\min\{m\in\mathbb{N}:\nexists\,\vec{U}\in\mathcal{P}\cup\mathcal{Q}
\textrm{ s.t. } \vec{R},\vec{U} \textrm{ collide and }
\psi(\vec{U})=m\}$\label{line:coloredge2_15}

\ENDIF

\ENDWHILE

\WHILE{$\exists$ some uncolored $\vec{R}\in\mathcal{Q}$}

\STATE
$\psi(\vec{R})\leftarrow\min\{m\in\mathbb{N}:\nexists\,\vec{U}\in\mathcal{P}\cup\mathcal{Q}
\textrm{ s.t. } \vec{R},\vec{U} \textrm{ collide and }
\psi(\vec{U})=m\}$\label{line:coloredge2_19}

\ENDWHILE

\end{algorithmic}
\end{algorithm}
\end{footnotesize}

In PROCESS-EDGE-1 (line \ref{line:coloredge1_7}), we determine the maximum
number of mutually exclusive pairs of rooted subtrees such that in each matched
pair (say $\vec{R},\vec{S}$) at least one of the rooted subtrees (say
$\vec{R}$) is an uncolored rooted subtree from the set $\mathcal{Q}_i$ (i.e.,
$\vec{R}\in\mathcal{Q}_i$) and the second rooted subtree ($\vec{S}$ in this
case) may either be (i) another uncolored rooted subtree from the set
$\mathcal{Q}_i$ (i.e., $\vec{S}\in\mathcal{Q}_i$) or (ii) a rooted subtree from
the set $\mathcal{P}_{i-1}[e_i]$ such that the uncolored rooted subtree in the
pair can be safely assigned its color (i.e., $\vec{S}\in\mathcal{P}_{i-1}$ such
that $\vec{R}$ does not collide with any rooted subtree that has already been
assigned the same color as $\vec{S}$). If the pair is of type (ii), then the
uncolored rooted subtree is assigned the same color as the other rooted subtree
(line \ref{line:coloredge1_9}). If the pair is of type (i), then both the
rooted subtrees of the pair are assigned the same color (line
\ref{line:coloredge1_13}). In this case, preference is given to the colors that
have already been assigned to some rooted subtree(s). If there is no such
suitable color, a new color is used.

In PROCESS-EDGE-2 (line \ref{line:coloredge2_7}), we determine the maximum
number of mutually exclusive pairs of rooted subtrees such that in each matched
pair (say $\vec{R},\vec{S}$) at least one of the rooted subtrees (say
$\vec{R}$) is an uncolored rooted subtree from the set $\mathcal{Q}_i$ and is
present on tree edge $\{u,x\}$ (i.e., $\vec{R}\in\mathcal{Q}_i[\{u,x\}]$) and
the second rooted subtree ($\vec{S}$ in this case) may either be (i) another
uncolored rooted subtree from the set $\mathcal{Q}_i$ present on edge $\{u,x\}$
(i.e., $\vec{S}\in\mathcal{Q}_i[\{u,x\}]$) or (ii) a rooted subtree from the
set $\mathcal{P}_{i-1}[\{u,x\}]\setminus\mathcal{P}_{i-1}[\{u,v\}]$ such that
the uncolored rooted subtree in the pair can be safely assigned its color (i.e.,
$\vec{S}\in\mathcal{P}_{i-1}[\{u,x\}]\setminus\mathcal{P}_{i-1}[\{u,v\}]$
such that $\vec{R}$ does not collide with any rooted subtree that has already
been assigned the same color as $\vec{S}$). If the pair is of type (ii), then
the uncolored rooted subtree is assigned the same color as the other rooted
subtree (line \ref{line:coloredge2_9}). If the pair is of type (i), then both
the rooted subtrees of the pair are assigned the same color (line
\ref{line:coloredge2_13}). Again preference is given to the colors that have
already been assigned to some rooted subtree(s). If there is no such suitable
color, a new color is used. After this all the remaining uncolored rooted
subtrees (all the rooted subtree in the set
$\mathcal{Q}_i\setminus\mathcal{Q}_i[\{u,x\}]$ and possibly some rooted
subtrees still uncolored in the set $\mathcal{Q}_i[\{u,x\}]$) are assigned
colors one at a time (lines \ref{line:coloredge2_15},
\ref{line:coloredge2_19}). Again preference is given to the colors that have
already been assigned to some rooted subtree(s).


\section{Analysis}
\label{sec:analysis}
In this section, we shall prove that GREEDY-COL is a
$\frac{5}{2}$-approximation algorithm for Problem \ref{prob:problem}.

\subsection{Some Local Properties}
We start off by stating a couple of straightforward but useful results about
the local structure of the problem at hand. Since the lemmas are very simple,
rather than complete proofs, we shall only give the intuition behind these.

\begin{lemma}
\label{lem:analysis7}
The complement of the conflict graph of any subset of rooted subtrees present
on a single host tree edge is bipartite.
\end{lemma}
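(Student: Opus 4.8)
The plan is to show that $\overline{G_{\mathcal{S}}}$ contains no odd cycle for any $\mathcal{S}\subseteq\mathcal{R}[(u,v)]$ (equivalently, any subset of rooted subtrees all present on a single directed edge), by exhibiting an explicit $2$-coloring of its vertices. Fix the host tree edge $\{u,v\}$ and the direction, say $(u,v)$; every rooted subtree $\vec{R}\in\mathcal{S}$ contains the directed edge $(u,v)$. First I would observe that two such rooted subtrees $\vec{R}_i,\vec{R}_j$ fail to collide (i.e.\ are adjacent in $\overline{G_{\mathcal{S}}}$) precisely when they share no directed edge at all, because the directed edge $(u,v)$ that they do share would otherwise witness a collision. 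So the complement-adjacency relation is ``edge-disjointness'' of rooted subtrees that both pass through $(u,v)$.

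The key structural step is to partition the directed edges each $\vec{R}$ uses near $(u,v)$. Since $\vec{R}$ is a rooted tree containing $(u,v)$, the vertex $u$ is reached from the root $r$ along a unique directed path, which arrives at $u$ via a single incoming edge (unless $u=r$), and then $\vec{R}$ leaves $u$ along $(u,v)$ and possibly along other edges incident to $u$. The crucial point: the directed edge $(u,v)$ forces a consistent orientation on the whole overlap. If $\vec{R}_i$ and $\vec{R}_j$ both contain $(u,v)$ and share any vertex $x$, then because their skeletons are subtrees of the common host tree $T$, the unique $T$-path from $x$ to the edge $\{u,v\}$ lies in both skeletons, hence they share a directed edge along that path. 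Therefore edge-disjointness of $\vec{R}_i,\vec{R}_j$ is equivalent to their being \emph{vertex-disjoint except at the two endpoints $u,v$ of the common edge}. Now assign to each $\vec{R}\in\mathcal{S}$ a label in $\{0,1\}$ according to whether its root $r(\vec{R})$ lies on the $u$-side or the $v$-side of the edge $\{u,v\}$ when that edge is deleted from $T$ — call these the ``incoming-at-$u$'' class and the ``incoming-at-$v$'' class. I claim this is a proper $2$-coloring of $\overline{G_{\mathcal{S}}}$: if $\vec{R}_i$ and $\vec{R}_j$ are in the same class, then both enter the edge $\{u,v\}$ from the same side; tracing back from $(u,v)$ toward their roots, they must share at least the first directed edge on that side (the edge of $T$ incident to $u$, resp.\ $v$, on the root side, which both rooted subtrees must traverse to reach $(u,v)$) — unless one of them has its root at $u$ or $v$ itself. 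This boundary case (root exactly at $u$, or at $v$) is the one subtlety, and is handled by noting that such a rooted subtree still shares $(u,v)$ with every other member of $\mathcal{S}$, so it collides with everything and is an isolated vertex in $\overline{G_{\mathcal{S}}}$; place it in either class freely. Hence every edge of $\overline{G_{\mathcal{S}}}$ joins vertices of opposite classes, so $\overline{G_{\mathcal{S}}}$ is bipartite.

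The step I expect to require the most care is the claim that two rooted subtrees through $(u,v)$ entering from the same side necessarily share a directed edge: one must rule out the degenerate configurations (root at $u$ or $v$, or a rooted subtree consisting of just the edge $(u,v)$ and nothing on the root side) and confirm that in all non-degenerate cases the host tree being a tree forces a common incident edge on the shared side. A clean way to organize this is: delete $\{u,v\}$ from $T$ to get components $T_u\ni u$ and $T_v\ni v$; each $\vec{R}\in\mathcal{S}$ has its root in exactly one of $T_u$, $T_v$ (or at $u$ or $v$); two rooted subtrees with roots both in $T_u\setminus\{u\}$ must both contain the unique edge of $T$ from $u$ into $T_u$ on the path to the closer root, and since $u$ has degree at most $3$ with $\{u,v\}$ being one edge, there is essentially only ``one edge of $T_u$ at $u$'' to worry about per side in the degree-$3$ setting — actually there are up to two, so the bipartition argument as stated (root-side) is what does the real work rather than a single-edge argument, and I would present it purely in terms of which side the root lies on. Once the side-classification is seen to be well-defined and to witness all complement-edges crossing between the two sides, bipartiteness is immediate, and I would keep the write-up at the level of this intuition as the surrounding text indicates the author intends only a proof sketch for this lemma.
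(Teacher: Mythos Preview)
You have misread the hypothesis. A \emph{host tree edge} is an undirected edge $\{u,v\}\in E_T$, so the relevant set is $\mathcal{R}[\{u,v\}]$, the rooted subtrees containing \emph{either} $(u,v)$ or $(v,u)$. You instead take $\mathcal{S}\subseteq\mathcal{R}[(u,v)]$ and assume every member contains the \emph{directed} edge $(u,v)$. Under that assumption every pair in $\mathcal{S}$ already collides on $(u,v)$, the conflict graph is complete, and $\overline{G_{\mathcal{S}}}$ is edgeless; the lemma becomes vacuous and your entire tracing-back argument is unnecessary. The contradiction surfaces in your own text: you later split $\mathcal{S}$ by whether the root lies on the $u$-side or the $v$-side of $\{u,v\}$, but a rooted subtree containing the directed edge $(u,v)$ has $u$ as the parent of $v$, so its root is \emph{always} on the $u$-side. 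Your bipartition collapses to one class.

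For the correct reading, your root-side partition is in fact the right one, but your justification never closes. The observation you are missing is that ``root on the $u$-side'' is \emph{equivalent} to ``contains the directed edge $(u,v)$'' (and symmetrically for the $v$-side). Hence two rooted subtrees in the same class share that very directed edge on $\{u,v\}$ and therefore collide; there is no need to trace back toward the roots or to worry about the up-to-two other edges at $u$. This is exactly the paper's argument: $\mathcal{R}[(u,v)]$ and $\mathcal{R}[(v,u)]$ partition $\mathcal{R}[\{u,v\}]$, and each part is a clique in $G_{\mathcal{S}}$, hence an independent set in $\overline{G_{\mathcal{S}}}$. Your attempt to find a shared edge away from $\{u,v\}$ is both unnecessary and, as you yourself note, not guaranteed to succeed in a degree-$3$ tree.
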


\begin{lemma}
\label{lem:analysis0}
If the load of the set $\mathcal{R}$ of rooted subtrees on the tree $T$ is
$l_{\mathcal{R}}$ then there exists a set
$\mathcal{S}\supseteq\mathcal{R}$ of rooted subtrees on the tree $T$ such that
the following hold:
\begin{itemize}
\item The chromatic numbers of the conflict graphs $G_{\mathcal{S}}$ and
$G_{\mathcal{R}}$ are the same.
\item For every edge $\{u,v\}\in E_T$,
$|\mathcal{S}[(u,v)]|=|\mathcal{S}[(v,u)]|=l_{\mathcal{S}}=l_{\mathcal{R}}$.
\end{itemize}
Moreover, $\mathcal{S}$ can be constructed in polynomial time.
\end{lemma}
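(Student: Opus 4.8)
The plan is to build $\mathcal{S}$ from $\mathcal{R}$ by adding short "padding" rooted subtrees—in fact single directed edges—so as to bring the number of rooted subtrees present on every directed edge up to exactly $l_{\mathcal{R}}$, without ever creating a new collision that forces an extra color. First I would fix $l = l_{\mathcal{R}}$ and, for each host tree edge $\{u,v\} \in E_T$, compute the deficits $d(u,v) = l - |\mathcal{R}[(u,v)]|$ and $d(v,u) = l - |\mathcal{R}[(v,u)]|$, both nonnegative by the definition of load. For the directed edge $(u,v)$ I would add $d(u,v)$ new rooted subtrees each consisting of the single directed edge $(u,v)$ (a rooted tree with root $u$), and symmetrically $d(v,u)$ copies of the single directed edge $(v,u)$. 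Call the resulting multiset $\mathcal{S}$. By construction $|\mathcal{S}[(u,v)]| = |\mathcal{S}[(v,u)]| = l$ for every $\{u,v\}\in E_T$, which is the second bullet (and in particular $l_{\mathcal{S}} = l$, since no directed edge now carries more than $l$ subtrees); this step is routine and clearly polynomial, as $\sum_{\{u,v\}} (d(u,v)+d(v,u)) \le 2l|E_T|$.

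The substantive point is the first bullet: $\chi(G_{\mathcal{S}}) = \chi(G_{\mathcal{R}})$. The inequality $\chi(G_{\mathcal{S}}) \ge \chi(G_{\mathcal{R}})$ is immediate since $\mathcal{R}\subseteq\mathcal{S}$ makes $G_{\mathcal{R}}$ an induced subgraph of $G_{\mathcal{S}}$. For the reverse inequality I would take an optimal coloring $\psi$ of $\mathcal{R}$ using $k = \chi(G_{\mathcal{R}})$ colors and extend it one added subtree at a time. The key observation is that each added subtree $\vec{R}$ is a single directed edge $(u,v)$, so it collides only with those subtrees in the current partially colored set that also contain $(u,v)$; there are strictly fewer than $l$ of those at the moment we color $\vec{R}$ (we process the padding edges for $(u,v)$ after all of $\mathcal{R}[(u,v)]$ and stop once the count reaches $l$, so when we color the $j$-th padding subtree for $(u,v)$ there are $|\mathcal{R}[(u,v)]| + j - 1 \le l-1$ previously colored subtrees on $(u,v)$). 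Hence its neighborhood in the conflict graph built so far has size at most $l - 1 < l \le k$, so some color in $\{1,\dots,k\}$ is free for it. Proceeding greedily over all padding subtrees yields a valid $k$-coloring of $G_{\mathcal{S}}$, so $\chi(G_{\mathcal{S}}) \le k = \chi(G_{\mathcal{R}})$, giving equality.

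The only delicate bookkeeping—and the step I expect to be the main obstacle—is making sure that padding subtrees added on the \emph{same} directed edge $(u,v)$ do not collide with one another in a way that defeats the count: since they are identical single edges they pairwise collide, so the $d(u,v)$ of them together with the $|\mathcal{R}[(u,v)]|$ original subtrees on $(u,v)$ form a clique of size exactly $l$ in $G_{\mathcal{S}}$, which is fine because $l \le \chi(G_{\mathcal{R}})$ as $\mathcal{R}$ already has a directed edge carrying $l$ subtrees (namely any edge achieving the load). So the clique number does not exceed $\chi(G_{\mathcal{R}})$, consistent with the coloring we constructed. I would also remark that the single-edge padding subtrees are genuinely rooted subtrees of $T$ (their skeletons are single edges of $T$, hence subtrees, and the in-degree conditions hold trivially), so $\mathcal{S}$ is a legitimate instance. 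Putting these observations together completes the proof; the construction and the verification are both elementary, with the load-achieving edge of $\mathcal{R}$ being the one fact that pins down why no padding can ever force a $(k+1)$-st color.
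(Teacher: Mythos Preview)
Your proposal is correct and follows essentially the same approach as the paper: the paper's ``intuition'' for this lemma is precisely to go over every edge $\{u,v\}\in E_T$ and add the required number of single-directed-edge rooted subtrees to bring $|\mathcal{S}[(u,v)]|$ and $|\mathcal{S}[(v,u)]|$ up to $l_{\mathcal{R}}$. You have simply filled in the chromatic-number argument (greedy extension using $l\le\chi(G_{\mathcal{R}})$) that the paper leaves implicit.
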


Lemma \ref{lem:analysis7} relies on two simple facts: (i) for any host tree
edge $\{u,v\}\in E_T$, $\mathcal{R}[(u,v)]$ and $\mathcal{R}[(v,u)]$ partition
$\mathcal{R}[\{u,v\}]$, and (ii) all rooted subtrees within each partition
collide with every other rooted subtree in that partition. The construction of
$\mathcal{S}$ in Lemma \ref{lem:analysis0} can be achieved by going over every
edge $\{u,v\}\in E_T$ and adding required number rooted subtrees containing
only the vertices $u$ and $v$ and one directed edge (either $(u,v)$ or $(v,u)$
such that $|\mathcal{S}[(u,v)]|$ and $|\mathcal{S}[(v,u)]|$ become equal to
$l_{\mathcal{R}}$. Lemma \ref{lem:analysis0} allows us to study only those
instances of the Problem \ref{prob:problem} where the given set of rooted
subtrees $\mathcal{R}$ is such that the number of rooted subtrees present on
all directed edges is the same. Going forward we shall assume that for every
edge $\{u,v\}\in E_T$,
$|\mathcal{R}[(u,v)]|=|\mathcal{R}[(v,u)]|=l_{\mathcal{R}}$. Consequently,
$|\mathcal{R}[\{u,v\}]|=2l_{\mathcal{R}}$.

\subsection{Roadmap}
First we give a roadmap that we shall follow for proving the approximation
ratio of $\frac{5}{2}$ for GREEDY-COL.
\begin{itemize}
\item First, in Lemmas \ref{lem:analysis1} and \ref{lem:analysis4}, we
characterize the types of host tree edges that are encountered during any round
of coloring in GREEDY-COL.
\item Next, in Lemma \ref{lem:analysis2}, we prove that if the edge to be
processed in $i$-th round of coloring is of type $1$, $2$ or $3$ as defined
in Lemma \ref{lem:analysis1}, then either no new colors are required in the
$i$-th round or the total number of colors in use at the end of the $i$-th
round is less than or equal to $2l_{\mathcal{R}}$.
\item If the edge to be processed in $i$-th round of coloring is of type $4$
as defined in Lemma \ref{lem:analysis1}, we first show in Lemma
\ref{lem:analysis8}, that either no new color is required in the $i$-th round
or the set of colors in use after $i$-th round of coloring is the same as the
set of colors used for the rooted subtrees that are assigned colors in the
$i$-th round ($\mathcal{Q}_i$) and all the rooted subtrees that are present on
host tree edge $\{u,w\}$ which is adjacent to the edge being processed in the
$i$-th round and has already been processed, i.e.,
$\psi^{\mathrm{GDY}}(\mathcal{Q}_i\cup\mathcal{P}_{i-1}[\{u,w\}])=\psi^{\mathrm{GDY}}(\mathcal{P}_i)$.
Next, we present bounds on the number of colors in the set
$\mathcal{Q}_i\cup\mathcal{P}_{i-1}[\{u,w\}]$ when subroutines
PROCESS-EDGE-1 (Lemma \ref{lem:analysis3}) or PROCESS-EDGE-2 (Lemma
\ref{lem:analysis5}) are employed.
\item Based on the previous lemmas, we determine the approximation ratio of
GREEDY-COL in a parameterized form in Lemma \ref{lem:analysis6}. In Lemma
\ref{lem:analysis9}, we determine the worst case (maximum) value of the
parameterized fraction obtained in Lemma \ref{lem:analysis6} to get the
approximation ratio.
\end{itemize}

\subsection{Host Tree Edge Types}
We characterize the host tree edge that is processed in any round of coloring
in GREEDY-COL by the status (whether already processed or not) of its adjacent
edges. Again, since the lemmas are straightforward, we only give the intuition
behind these.

\begin{lemma}
\label{lem:analysis1} In GREEDY-COL, when a host tree edge $\{u,v\}\in E_T$
(where $u$ was discovered before $v$ in the BFS) is being processed, then all
the edges adjacent to vertex $v$ are unprocessed, and for the edges adjacent to
vertex $u$ exactly one of the following is satisfied:
\begin{enumerate}
\item None of the edges adjacent to $u$ has been processed. In this case edge
$\{u,v\}$ is the first edge to be processed among all host tree edges.
\item Vertex $u$ has degree $2$ with adjacent edges
$\{u,v\},\{u,w\}$ of which edge $\{u,w\}$ has already been processed.
\item Vertex $u$ has degree $3$ with adjacent edges
$\{u,v\},\{u,w\},\{u,x\}$ of which edges $\{u,w\},\{u,x\}$ have already been processed.
\item Vertex $u$ has degree $3$ with adjacent edges
$\{u,v\},\{u,w\},\{u,x\}$ of which edge $\{u,w\}$ has already been processed while edge
$\{u,x\}$ has not yet been processed.
\end{enumerate}
\end{lemma}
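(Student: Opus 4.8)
The plan is to read the four cases directly off the structure of breadth-first search on a tree. Root the host tree $T$ at the vertex $\rho$ chosen as the BFS root, and recall that GREEDY-COL processes the edges of $T$ in the order in which BFS discovers them. Since $T$ is a tree, BFS explores every edge of $T$ as a tree edge, and for each non-root vertex $y$ the unique neighbour of $y$ closer to $\rho$ (its parent on the path from $\rho$ to $y$) is the only neighbour of $y$ discovered before $y$ — every other neighbour of $y$ lies one BFS level farther from $\rho$, hence is discovered after $y$. Applying this to the edge at hand: since $u$ is discovered before $v$ and $\{u,v\}\in E_T$, vertex $u$ is the parent of $v$, so $v$ is a child of $u$ and the edge $\{u,v\}$ is discovered — hence processed — exactly when $u$ is dequeued.

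Next I would fix the order in which the edges incident to $u$ are processed, using two elementary facts about BFS: (i) the queue is FIFO, so a vertex discovered earlier is dequeued earlier; and (ii) when a vertex $x$ is dequeued, the edges joining $x$ to its still-undiscovered neighbours — precisely the child edges of $x$ — are discovered in one consecutive block. From (i), if $u$ has a parent $p$ then $p$ is dequeued before $u$, so $\{u,p\}$ is discovered, hence processed, strictly before every child edge of $u$, in particular before $\{u,v\}$. From (ii), among the child edges of $u$ exactly those to children examined before $v$ (when $u$ is dequeued) are processed before $\{u,v\}$, while every later child edge of $u$ is processed afterwards. The same reasoning gives the first assertion of the lemma: the edges incident to $v$ other than $\{u,v\}$ run to $v$'s children, are discovered only when $v$ is dequeued — which is after $u$ is dequeued — and so are still unprocessed while $\{u,v\}$ is being processed.

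It remains to turn this into the four-way classification. Let $a$ be the number of edges incident to $u$ that are already processed when $\{u,v\}$ is processed; by the previous paragraph these are exactly the parent edge of $u$ (present iff $u\neq\rho$) together with the child edges of $u$ to children examined before $v$ — in particular $a$ counts a subset of the $\deg_T(u)-1$ edges at $u$ other than $\{u,v\}$, so $0\le a\le \deg_T(u)-1\le 2$ and $(a,\deg_T(u))\in\{(0,1),(0,2),(1,2),(0,3),(1,3),(2,3)\}$. If $a=0$, no edge at $u$ is processed; then $u$ cannot have a parent (its parent edge would already be processed), so $u=\rho$ and $v$ is the first child of $\rho$ examined, which forces $\{u,v\}=e_1$ — Case~1. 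If $a=1$ and $\deg_T(u)=2$, the unique other edge at $u$ is processed — Case~2. If $a=1$ and $\deg_T(u)=3$, exactly one of the two other edges at $u$ is processed and the other is not — Case~4. If $a=2$, which forces $\deg_T(u)=3$, both other edges at $u$ are processed — Case~3. These four descriptions are pairwise incompatible, being distinguished by the pair $(a,\deg_T(u))$, which gives the ``exactly one'' clause.

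The whole argument is routine; the one place that genuinely needs care is the inference ``discovered earlier implies processed earlier'' applied to the edges around $u$ — in particular the claim that the parent edge $\{u,p\}$ precedes all child edges of $u$ — which is exactly where the FIFO property of the queue (so $p$ is dequeued before $u$) and the consecutive-block structure of a vertex's child edges come in. Once those are nailed down, the bound $\deg_T(u)\le 3$ makes the enumeration of cases immediate.
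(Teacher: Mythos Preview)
Your argument is correct and follows essentially the same approach as the paper: the paper's own justification is only a one-line sketch (``edges are processed in BFS discovery order, so the processed edges at any moment form a subtree, and the degree bound $\le 3$ then makes the case enumeration immediate''), and your proof is simply a careful unpacking of that sketch via the parent/child structure and the FIFO queue. The only cosmetic point is that your ``distinguished by the pair $(a,\deg_T(u))$'' is slightly loose since Case~1 covers three such pairs, but your intent---that the four cases partition the possible pairs---is clear and correct.
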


Lemma \ref{lem:analysis1} relies on the fact that edges are processed in the
order of their discovery in a BFS, hence the set of edges that have been
processed during any time in GREEDY-COL form a tree. Using this observation, it
is simple to list the types of edges since we have restricted $T$ to have
degree at most $3$.


\begin{lemma}
\label{lem:analysis4}
In the $i$-th round of coloring in GREEDY-COL (while processing host tree edge
$e_i=\{u,v\}\in E_T$), if a rooted subtree $\vec{P}\in\mathcal{P}_{i-1}$, that
has already been assigned a color, collides with any uncolored rooted subtree
$\vec{Q}\in\mathcal{Q}_i$, then exactly one of the following is satisfied:
\begin{itemize}
\item Edge $e_i=\{u,v\}$ is of type $1$, $2$ or $3$ defined in Lemma
\ref{lem:analysis1}, and rooted subtree $\vec{P}\in\mathcal{P}_{i-1}[\{u,v\}]$.
\item Edge $e_i=\{u,v\}$ is of type $4$ defined in Lemma \ref{lem:analysis1},
and rooted subtree $\vec{P}\in\mathcal{P}_{i-1}[\{u,v\}]\cup\mathcal{P}_{i-1}[\{u,x\}]$
(where the vertex $x$ and the edge $\{u,x\}$ are as defined in Lemma
\ref{lem:analysis1}).
\end{itemize}
\end{lemma}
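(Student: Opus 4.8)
The plan is to read everything off the BFS edge order. Let $T'$ denote the subtree of $T$ consisting of the edges $e_1,\dots,e_{i-1}$ processed before round $i$; this is genuinely a connected subtree, as noted after Lemma~\ref{lem:analysis1}, and by that lemma $v\notin V_{T'}$ always, while $u\in V_{T'}$ in edge types $2$--$4$. Write $P$ and $Q$ for the skeletons of $\vec P$ and $\vec Q$. Two facts I would record first, both immediate from how GREEDY-COL colors: since processing an edge colors \emph{every} uncolored rooted subtree present on it, any rooted subtree whose skeleton meets $E_{T'}$ already lies in $\mathcal{P}_{i-1}$; conversely a rooted subtree colored in some round $j\le i-1$ is present on $e_j\in E_{T'}$. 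Hence $P$ contains at least one edge of $T'$, whereas $Q$ contains $e_i=\{u,v\}$ and \emph{no} edge of $T'$ (else $\vec Q$ would already be in $\mathcal{P}_{i-1}$, contradicting $\vec Q\in\mathcal{Q}_i$).

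Next I would pin down the reach of $Q$. The claim $V_Q\cap V_{T'}=\{u\}$ holds because if $Q$ contained some $z\in V_{T'}\setminus\{u\}$, then the $T$-path from $u$ to $z$ would lie inside the connected subtree $T'$ and also inside the subtree $Q$, forcing an edge of $T'$ into $Q$. Consequently $Q$ uses none of the edges of $T'$ incident to $u$, so --- using Lemma~\ref{lem:analysis1} to say which edges at $u$ are still unprocessed --- $V_Q\subseteq\{u\}\cup C_v$ in types $2$ and $3$, and $V_Q\subseteq\{u\}\cup C_v\cup C_x$ in type $4$, where $C_v,C_w,C_x$ denote the components of $T-u$ containing $v,w,x$. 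Symmetrically, $V_{T'}\setminus\{u\}\subseteq C_w$ in types $2$ and $4$, and $V_{T'}\setminus\{u\}\subseteq C_w\cup C_x$ in type $3$; in particular it is nonempty since $w$ lies in it.

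Then I would locate the collision edge. Since $\vec P$ and $\vec Q$ collide, their skeletons share an edge $\{a,b\}\in E_T$, so $a,b\in V_Q$; and $P$, having an edge in $T'$, contains some vertex $q\in V_{T'}\setminus\{u\}$. Because $T$ is a tree, every path from $\{a,b\}$ to $q$ runs through $u$, and the connected subtree $P$ contains such a path, so it remains only to name the edge at $u$ this path uses on the $\{a,b\}$-side. An edge of $T$ cannot straddle two components of $T-u$, so either $u$ is an endpoint of $\{a,b\}$ or $a,b$ lie in a single component; combined with the containment of $V_Q$ from the previous step, this forces the $\{a,b\}$-side edge at $u$ to be $\{u,v\}$ in types $2$ and $3$ (giving $\vec P\in\mathcal{P}_{i-1}[\{u,v\}]$), and to be $\{u,v\}$ or $\{u,x\}$ in type $4$ (giving $\vec P\in\mathcal{P}_{i-1}[\{u,v\}]\cup\mathcal{P}_{i-1}[\{u,x\}]$). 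The ``exactly one'' clause then follows from Lemma~\ref{lem:analysis1}, since $e_i$ has exactly one of the four types: types $1$--$3$ fall under the first bullet (type $1$ being vacuous because $\mathcal{P}_0=\emptyset$) and type $4$ under the second, and these are mutually exclusive.

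I expect the type-$4$ case to be the main obstacle. Because the edge $\{u,x\}$ is still unprocessed, $Q$ can extend past $x$ into $C_x$, so the shared edge $\{a,b\}$ --- and hence $\vec P$ --- may sit on the $\{u,x\}$ arm rather than the $\{u,v\}$ arm, which is precisely why the conclusion has to be relaxed to $\mathcal{P}_{i-1}[\{u,v\}]\cup\mathcal{P}_{i-1}[\{u,x\}]$ there. The only fiddly points are the degenerate cases in which $u$ is itself an endpoint of $\{a,b\}$ and the check that $\{a,b\}$ does not cross components of $T-u$; both are dispatched by the tree structure and by the containment $V_Q\subseteq\{u\}\cup C_v\,(\cup\,C_x)$ established in the second step.
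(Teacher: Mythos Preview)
Your proof is correct and follows essentially the same approach as the paper's sketch: use the BFS order to confine the already-processed edges to a connected subtree on the $u$-side, observe that $\vec{Q}$ cannot touch any processed edge, and then use connectivity of $P$ to force it through the appropriate edge(s) at $u$. The only organizational difference is that the paper splits the argument by deleting the edge $\{u,v\}$ for types $1$--$3$ and deleting the vertex $u$ for type $4$, whereas you handle all nontrivial cases uniformly via the components $C_v,C_w,C_x$ of $T-u$; this is a minor presentational choice, not a different method.
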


For Lemma \ref{lem:analysis4}, first observe that in case the edge
$\{u,v\}$ is of type $1$, $2$ or $3$ defined in Lemma \ref{lem:analysis1},
graph $G$ with vertex set $V_G=V_T$ and edge set $E_G=E_T\setminus\{u,v\}$ is a
forest of exactly $2$ trees say $T_u$ and $T_v$ (these may have only $1$
vertex). Moreover, since edges of $T$ are processed in a BFS based ordering,
the edges that have already been processed must all be present in one common
component of $G$. Let this be $T_u$. Hence all the rooted subtrees that have
already been colored must be present on $T_u$. On the other hand, again due to
the BFS based ordering, the rooted subtrees that are scheduled to be colored
while processing edge $\{u,v\}$ cannot be present on $T_u$ (otherwise they
would already have been colored). This implies that if a rooted subtree
$\vec{P}$, that was previously colored, collides with a rooted subtree that is
scheduled to be colored in this round of coloring, then the collision must
occur at some edge in $E_{T_v}\cup\{u,v\}$. But since $\vec{P}$ is present on
$T_u$, and the collision can only occur at some edge in $E_{T_v}\cup\{u,v\}$,
hence $\vec{P}$ must also be present on $\{u,v\}$. Proof for the case when
$\{u,v\}$ is of type $4$ defined in Lemma \ref{lem:analysis1} proceeds along
similar line of reasoning. The forest that we use for analysis in this case is
$G[V_T\setminus\{u\}]$.

\subsection{Type $1$, $2$, and $3$ Edges}
Next we give a bound on $\psi^{\mathrm{GDY}}(\mathcal{P}_i)$, the set of colors
used by GREEDY-COL for coloring to all the rooted subtrees present on host tree
edges processed in the first $i$ rounds of coloring, when the edge processed in
the $i$-th round of coloring is of type $1$, $2$ or $3$ defined in Lemma
\ref{lem:analysis1}.
\begin{lemma}
\label{lem:analysis2}
If edge $e_i=\{u,v\}$ being processed in the $i$-th round of GREEDY-COL is of
type $1$, $2$ or $3$ defined in Lemma \ref{lem:analysis1}, then
\begin{equation}
|\psi^{\mathrm{GDY}}(\mathcal{P}_i)|\leq\max\left\{2l_{\mathcal{R}},|\psi^{\mathrm{GDY}}(\mathcal{P}_{i-1})|\right\}.\nonumber
\end{equation}
\end{lemma}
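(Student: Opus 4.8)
The plan is to reason about what GREEDY-COL does in round $i$ when $e_i=\{u,v\}$ is of type $1$, $2$, or $3$, and to show that if any \emph{new} color (one not in $\psi^{\mathrm{GDY}}(\mathcal{P}_{i-1})$) is introduced in this round, then in fact every color in use at the end of the round must appear on some rooted subtree present on $e_i$, and hence the total number of colors is at most $|\mathcal{R}[\{u,v\}]| = 2l_{\mathcal{R}}$. If no new color is introduced, then $|\psi^{\mathrm{GDY}}(\mathcal{P}_i)| = |\psi^{\mathrm{GDY}}(\mathcal{P}_{i-1})|$ and we are done; so the whole content is the first case.

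The first key step is to invoke Lemma~\ref{lem:analysis4}: for a type $1$, $2$, or $3$ edge, every previously colored rooted subtree $\vec{P}\in\mathcal{P}_{i-1}$ that collides with some uncolored $\vec{Q}\in\mathcal{Q}_i$ must itself lie on $e_i=\{u,v\}$, i.e.\ $\vec{P}\in\mathcal{P}_{i-1}[\{u,v\}]$. The second key step is to examine the greedy rule on line~\ref{line:greedycolor_17}: when a rooted subtree $\vec{R}\in\mathcal{Q}_i$ is assigned a genuinely new color $c$, it must be because every color currently in use is already ``blocked'' for $\vec{R}$ — that is, for each such color $c'$ there is a colored rooted subtree $\vec{S}$ with $\psi^{\mathrm{GDY}}(\vec{S})=c'$ that collides with $\vec{R}$. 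By the first step (and the trivial fact that any colliding $\vec{S}\in\mathcal{Q}_i$ is by definition on $e_i$), every one of these blocking rooted subtrees $\vec{S}$ lies on $\{u,v\}$. So at the moment the first new color is used, every color then in use is witnessed by a rooted subtree on $e_i$. After this point only colors already present on $e_i$ (or further new ones, each again witnessed on $e_i$ by the same argument applied to later subtrees of $\mathcal{Q}_i$) get used — I need to phrase the induction over the subtrees of $\mathcal{Q}_i$ processed in this round carefully so that the invariant ``every color in use is witnessed by a rooted subtree present on $e_i$'' is maintained once a new color has been introduced. Finally, the set of rooted subtrees present on $\{u,v\}$ is $\mathcal{R}[\{u,v\}]$, which has size $2l_{\mathcal{R}}$ by the normalization from Lemma~\ref{lem:analysis0}; since each contributes at most one color, $|\psi^{\mathrm{GDY}}(\mathcal{P}_i)|\le 2l_{\mathcal{R}}$.

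Combining the two cases gives $|\psi^{\mathrm{GDY}}(\mathcal{P}_i)|\le\max\{2l_{\mathcal{R}},\,|\psi^{\mathrm{GDY}}(\mathcal{P}_{i-1})|\}$, as claimed.

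The main obstacle I anticipate is making the invariant argument in the second step fully rigorous: I must track the state as the subtrees of $\mathcal{Q}_i$ are colored one by one, show that \emph{before} the first new color is used the bound $|\psi^{\mathrm{GDY}}(\mathcal{P}_{i-1})|$ still controls things, and show that \emph{from} the first new color onward every color in use (old or new) is witnessed by a rooted subtree on $e_i$ — the subtle point being that an "old" color $c'$ already in use before round $i$ only gets counted against the $2l_{\mathcal{R}}$ bound if it is actually blocking some $\vec{R}\in\mathcal{Q}_i$ at the relevant moment, which is exactly what the greedy step together with Lemma~\ref{lem:analysis4} guarantees. Once that bookkeeping is set up correctly, the rest is immediate from the counting bound $|\mathcal{R}[\{u,v\}]|=2l_{\mathcal{R}}$.
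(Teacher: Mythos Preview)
Your proposal is correct and rests on exactly the same two ingredients as the paper's proof: Lemma~\ref{lem:analysis4} (any previously colored subtree that blocks a member of $\mathcal{Q}_i$ must lie on $e_i$) and the normalization $|\mathcal{R}[\{u,v\}]|=2l_{\mathcal{R}}$. The paper organizes the argument as a direct count rather than an invariant: it observes that there are $|\psi^{\mathrm{GDY}}(\mathcal{P}_{i-1})|-|\psi^{\mathrm{GDY}}(\mathcal{P}_{i-1}[\{u,v\}])|$ ``free'' colors available to every member of $\mathcal{Q}_i$, so the number of new colors is at most $[\,|\mathcal{Q}_i|-(\text{free colors})\,]^+\le[2l_{\mathcal{R}}-|\psi^{\mathrm{GDY}}(\mathcal{P}_{i-1})|]^+$, which rearranges to the claim. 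This sidesteps the per-subtree invariant bookkeeping you flagged as the main obstacle, but the underlying idea is identical.
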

\begin{proof}
First note that $\mathcal{R}[\{u,v\}]$, the set of rooted subtrees present on
host tree edge $e_i=\{u,v\}$, can be partitioned into $\mathcal{Q}_i$ and
$\mathcal{P}_{i-1}[\{u,v\}]$. Therefore
\begin{align}
\label{eq:analysis2}
|\mathcal{Q}_i|&=|\mathcal{R}[\{u,v\}]|-|\mathcal{P}_{i-1}[\{u,v\}]|\\
&\leq2l_{\mathcal{R}}-|\psi^{\mathrm{GDY}}(\mathcal{P}_{i-1}[\{u,v\}])|.\nonumber
\end{align}

Since the edge $e_i=\{u,v\}$ being processed in the $i$-th round of coloring is
of type $1$, $2$ or $3$ defined in Lemma \ref{lem:analysis1}, according to
Lemma \ref{lem:analysis4}, if a previously colored rooted subtree
$\vec{P}\in\mathcal{P}_{i-1}$ collides with any rooted subtree
$\vec{Q}\in\mathcal{Q}_i$ that is set to be colored in the $i$-th round, then
$\vec{P}\in\mathcal{P}_{i-1}[\{u,v\}]$. Hence, any color present in the set $\psi^{\mathrm{GDY}}(\mathcal{P}_{i-1})$ but absent in the set
$\psi^{\mathrm{GDY}}(\mathcal{P}_{i-1}[\{u,v\}])$ can be safely assigned to any
rooted subtree in the set $\mathcal{Q}_i$. There are $|\psi^{\mathrm{GDY}}(\mathcal{P}_{i-1})|-|\psi^{\mathrm{GDY}}(\mathcal{P}_{i-1}[\{u,v\}])|$
such colors. GREEDY-COL tries to reuse these colors first before using new
colors when coloring rooted subtrees in $\mathcal{Q}_i$. Therefore, the number
of new colors required in the $i$-th round is given by
\begin{align}
\label{eq:analysis3}
&\qquad|\psi^{\mathrm{GDY}}(\mathcal{P}_i)|-|\psi^{\mathrm{GDY}}(\mathcal{P}_{i-1})|\\
\leq&\bigg[|\mathcal{Q}_i|-\Big(|\psi^{\mathrm{GDY}}(\mathcal{P}_{i-1})|
-|\psi^{\mathrm{GDY}}(\mathcal{P}_{i-1}[\{u,v\}])|\Big)\bigg]^+\nonumber\\
\leq&\bigg[\Big(2l_{\mathcal{R}}-|\psi^{\mathrm{GDY}}(\mathcal{P}_{i-1}[\{u,v\}])|\Big)\nonumber\\
&\phantom{\bigg[}{}-\Big(|\psi^{\mathrm{GDY}}(\mathcal{P}_{i-1})|-|\psi^{\mathrm{GDY}}(\mathcal{P}_{i-1}[\{u,v\}])|\Big)\bigg]^+\nonumber\\
=&\left[2l_{\mathcal{R}}-|\psi^{\mathrm{GDY}}(\mathcal{P}_{i-1})|\right]^+,\nonumber
\end{align}
where the second inequality is by equation (\ref{eq:analysis2}). Equation
(\ref{eq:analysis3}) can now be rearranged to get the required result.
\end{proof}

\subsection{Type $4$ Edges}
Next we consider the case when edge $e_i=\{u,v\}$ being processed during the
$i$-th round of GREEDY-COL is of type $4$ defined in Lemma \ref{lem:analysis1}.
As stated in Lemma \ref{lem:analysis1}, we assume that edge $e_i=\{u,v\}$ is
such that (i) vertex $u$ was discovered before vertex $v$ in the BFS; (ii) all
the edges adjacent to vertex $v$ are unprocessed after the first $i-1$ rounds
of coloring; and (iii) vertex $u$ has degree $3$ with adjacent edges $\{u,v\}$,
$\{u,w\}$ and $\{u,x\}$ of which edge $\{u,w\}$ has already been processed
while edge $\{u,x\}$ has not yet been processed. In this case the set of
relevant rooted subtrees consist of $\mathcal{P}_{i-1}[\{u,w\}]$, the set of
rooted subtrees that have been assigned colors in the first $i-1$ rounds of
coloring and are present on the host tree edge $\{u,w\}$, and $\mathcal{Q}_i$,
the set of rooted subtrees that are to be colored in the $i$-th round. These
can be partitioned based on whether they are present or absent on the three
host tree edges $\{u,v\},\{u,w\},\{u,x\}$. This is shown in more detail in
Figure \ref{fig:analysis2}.

\begin{figure}
\centering
\includegraphics[width=5.5cm]{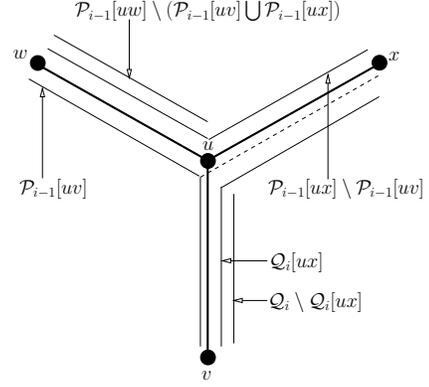}
\caption{\footnotesize{Sets of \emph{interesting} rooted subtrees encountered while
processing edge $\{u,v\}$ of type $4$ defined in Lemma \ref{lem:analysis1}.
Solid line on an edge implies that every rooted subtree of that set must be
present on that edge. Absence of a line on an edge implies that no rooted
subtree of that set can be present on that edge. Dotted line on an edge implies
that rooted subtrees of the set may or may not be present on that edge.}}
\label{fig:analysis2}
\end{figure}

\begin{lemma}
\label{lem:analysis8}
If edge $e_i=\{u,v\}$ being processed in the $i$-th round of GREEDY-COL is of
type $4$ defined in Lemma \ref{lem:analysis1}, then
\begin{align}
|\psi^{\mathrm{GDY}}(\mathcal{P}_i)|&=
\max\Big\{\big|\psi^{\mathrm{GDY}}(\mathcal{Q}_i\cup\mathcal{P}_{i-1}[\{u,w\}])\big|,\nonumber\\
&\phantom{=\max\Big\{}|\psi^{\mathrm{GDY}}(\mathcal{P}_{i-1})|\Big\},\nonumber
\end{align}
where the edge $\{u,w\}\in E_T$ is as defined in Lemma
\ref{lem:analysis1}.
\end{lemma}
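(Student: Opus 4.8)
The plan is to show that when $e_i=\{u,v\}$ is a type $4$ edge, the set of colors in use after the $i$-th round, namely $\psi^{\mathrm{GDY}}(\mathcal{P}_i)$, is exactly the union of the colors carried over from before the round and the colors appearing on the ``local'' family $\mathcal{Q}_i\cup\mathcal{P}_{i-1}[\{u,w\}]$. Since $\mathcal{P}_i=\mathcal{P}_{i-1}\cup\mathcal{Q}_i$, one inclusion, $|\psi^{\mathrm{GDY}}(\mathcal{P}_i)|\ge\max\{\dots\}$, is immediate: both $\mathcal{P}_{i-1}$ and $\mathcal{Q}_i\cup\mathcal{P}_{i-1}[\{u,w\}]$ are subsets of $\mathcal{P}_i$. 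The content is the reverse inequality, for which it suffices to prove that every color used on a rooted subtree in $\mathcal{P}_i$ is either already used on a rooted subtree in $\mathcal{P}_{i-1}$, or used on a rooted subtree in $\mathcal{Q}_i\cup\mathcal{P}_{i-1}[\{u,w\}]$. The only rooted subtrees in $\mathcal{P}_i$ not in $\mathcal{P}_{i-1}$ are those in $\mathcal{Q}_i$, so the real claim is: every color assigned to a subtree in $\mathcal{Q}_i$ that is \emph{new} (not already in $\psi^{\mathrm{GDY}}(\mathcal{P}_{i-1})$) also appears on some subtree in $\mathcal{Q}_i\cup\mathcal{P}_{i-1}[\{u,w\}]$.

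To prove this I would trace through PROCESS-EDGE-1 and PROCESS-EDGE-2 (whichever is selected; the statement must hold in both cases since the chosen coloring is one of them, and the bound is about the selected one). First I would invoke Lemma~\ref{lem:analysis4}: for a type $4$ edge, any previously colored subtree that collides with a subtree of $\mathcal{Q}_i$ lies in $\mathcal{P}_{i-1}[\{u,v\}]\cup\mathcal{P}_{i-1}[\{u,x\}]$. The key geometric observation, read off from Figure~\ref{fig:analysis2}, is that $\mathcal{P}_{i-1}[\{u,v\}]\subseteq\mathcal{P}_{i-1}[\{u,w\}]$ and $\mathcal{P}_{i-1}[\{u,x\}]\subseteq\mathcal{P}_{i-1}[\{u,w\}]$: any subtree colored in an earlier round and present on $\{u,v\}$ (resp. $\{u,x\}$) must, since those edges were not yet processed before round $i$ and edges are processed in BFS order, have been colored while processing some edge on the far side of $u$ from $v$ (and from $x$), hence it passes through $u$ along $\{u,w\}$. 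So in fact every previously colored subtree that interacts with $\mathcal{Q}_i$ lies in $\mathcal{P}_{i-1}[\{u,w\}]$. Now consider a color $m$ assigned in round $i$ to some $\vec R\in\mathcal{Q}_i$. If $m\in\psi^{\mathrm{GDY}}(\mathcal{P}_{i-1})$ we are done. Otherwise $m$ was introduced fresh during this round. In PROCESS-EDGE-1 a fresh color is introduced only at line~\ref{line:coloredge1_13} (to a matched pair, both in $\mathcal{Q}_i$) or at the \textbf{else} branch in the final \textbf{while} (to an unmatched $\vec R\in\mathcal{Q}_i$); in either case $m\in\psi^{\mathrm{GDY}}(\mathcal{Q}_i)$. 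The same holds for PROCESS-EDGE-2, where fresh colors are introduced only at lines~\ref{line:coloredge2_13}, \ref{line:coloredge2_15}, \ref{line:coloredge2_19}, each to a subtree of $\mathcal{Q}_i$. Hence every fresh color lies in $\psi^{\mathrm{GDY}}(\mathcal{Q}_i)\subseteq\psi^{\mathrm{GDY}}(\mathcal{Q}_i\cup\mathcal{P}_{i-1}[\{u,w\}])$, which finishes the reverse inequality.

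The main obstacle I anticipate is the bookkeeping of ``which colors count as fresh,'' specifically making sure that when line~\ref{line:coloredge1_13} or \ref{line:coloredge2_13} picks the minimum color not conflicting with any $\vec U\in\mathcal{P}\cup\mathcal{Q}$, the chosen color --- if it is one already in use on some subtree colored \emph{within} this round --- is still accounted for, and that no step ever reuses a color from $\psi^{\mathrm{GDY}}(\mathcal{P}_{i-1})\setminus\psi^{\mathrm{GDY}}(\mathcal{P}_{i-1}[\{u,w\}])$ in a way that would be forced (it may reuse such a color, but that only helps, since such colors are already counted on the left via $\mathcal{P}_{i-1}$). A secondary subtlety is handling the ``no new color'' case in the statement ($|\psi^{\mathrm{GDY}}(\mathcal{P}_i)|=|\psi^{\mathrm{GDY}}(\mathcal{P}_{i-1})|$), which is simply the situation where every fresh-color branch happens to reuse an existing color; this is absorbed by the $\max$ and needs no separate argument. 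I would also remark that the equality (as opposed to $\le$) is what lets subsequent lemmas (Lemmas~\ref{lem:analysis3} and \ref{lem:analysis5}) bound the total by bounding $|\psi^{\mathrm{GDY}}(\mathcal{Q}_i\cup\mathcal{P}_{i-1}[\{u,w\}])|$ alone, so I would state it in that sharp form.
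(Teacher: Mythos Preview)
Your reduction contains a genuine gap. You claim it suffices to show that every color on $\mathcal{P}_i$ lies in $\psi^{\mathrm{GDY}}(\mathcal{P}_{i-1})$ or in $\psi^{\mathrm{GDY}}(\mathcal{Q}_i\cup\mathcal{P}_{i-1}[\{u,w\}])$. But that only yields
\[
\psi^{\mathrm{GDY}}(\mathcal{P}_i)\;=\;\psi^{\mathrm{GDY}}(\mathcal{P}_{i-1})\;\cup\;\psi^{\mathrm{GDY}}(\mathcal{Q}_i\cup\mathcal{P}_{i-1}[\{u,w\}]),
\]
hence a bound by the \emph{sum} of the two cardinalities, not their \emph{maximum}. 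Indeed your ``real claim''---that every \emph{fresh} color lies in $\psi^{\mathrm{GDY}}(\mathcal{Q}_i)$---is a tautology (fresh colors are assigned only to members of $\mathcal{Q}_i$), and holds regardless of the edge $\{u,w\}$ or the algorithm's greediness; it cannot possibly be the content of the lemma.

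What is actually needed is the implication: \emph{if any brand-new color is introduced in round $i$, then every ``free'' color $c\in\psi^{\mathrm{GDY}}(\mathcal{P}_{i-1})\setminus\psi^{\mathrm{GDY}}(\mathcal{P}_{i-1}[\{u,w\}])$ already appears on some subtree in $\mathcal{Q}_i$}. This forces $\psi^{\mathrm{GDY}}(\mathcal{P}_{i-1})\subseteq\psi^{\mathrm{GDY}}(\mathcal{Q}_i\cup\mathcal{P}_{i-1}[\{u,w\}])$ whenever the maximum is attained by the second term, and the equality with $\max$ follows. This is exactly where the min-color rule (lines~\ref{line:coloredge1_13}, \ref{line:coloredge2_13}, \ref{line:coloredge2_15}, \ref{line:coloredge2_19}) is used: when such a line selects a color $m>\max\psi^{\mathrm{GDY}}(\mathcal{P}_{i-1})$ for $\vec{R}$, every smaller color---in particular every free color $c$---must be blocked by some colliding $\vec{U}\in\mathcal{P}_{i-1}\cup\mathcal{Q}_i$ with $\psi(\vec{U})=c$. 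By Lemma~\ref{lem:analysis4} combined with $\mathcal{P}_{i-1}[\{u,v\}]\cup\mathcal{P}_{i-1}[\{u,x\}]\subseteq\mathcal{P}_{i-1}[\{u,w\}]$, a blocking $\vec{U}\in\mathcal{P}_{i-1}$ lies in $\mathcal{P}_{i-1}[\{u,w\}]$, contradicting that $c$ is free; hence $\vec{U}\in\mathcal{Q}_i$ and $c\in\psi^{\mathrm{GDY}}(\mathcal{Q}_i)$. The paper's proof phrases this as a counting statement (the $[\,\cdot\,]^+$ formula for the number of new colors), but the underlying mechanism is the same. Your ``obstacle'' paragraph even mentions colors in $\psi^{\mathrm{GDY}}(\mathcal{P}_{i-1})\setminus\psi^{\mathrm{GDY}}(\mathcal{P}_{i-1}[\{u,w\}])$, but treats their reuse as optional rather than as the crux of the argument.
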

\begin{proof}
Since the edge $e_i=\{u,v\}$ being processed in the $i$-th round of coloring is
of type $4$ defined in Lemma \ref{lem:analysis1}, according to Lemma
\ref{lem:analysis4}, if a rooted subtree that has already been colored in the
first $i-1$ rounds of GREEDY-COL collides with any rooted subtree that is to be
colored in the $i$-th round, then it must belong to the set
$\mathcal{P}_{i-1}[\{u,v\}]\cup\mathcal{P}_{i-1}[\{u,x\}]$. Since
$\mathcal{P}_{i-1}[\{u,v\}]\cup\mathcal{P}_{i-1}[\{u,x\}]\subseteq\mathcal{P}_{i-1}[\{u,w\}]$,
this implies that any rooted subtree in the set
$\mathcal{P}_{i-1}\setminus\mathcal{P}_{i-1}[\{u,w\}]$ cannot collide with any
rooted subtree in the set $\mathcal{Q}_i$. Therefore, any color already
assigned to some rooted subtree in the set
$\mathcal{P}_{i-1}\setminus\mathcal{P}_{i-1}[\{u,w\}]$, but not to any rooted
subtree in the set $\mathcal{P}_{i-1}[\{u,w\}]$, can be assigned to any rooted
subtree in the set $\mathcal{Q}_i$. There are
$|\psi^{\mathrm{GDY}}(\mathcal{P}_{i-1})|-|\psi^{\mathrm{GDY}}(\mathcal{P}_{i-1}[\{u,w\}])|$
such colors. During the $i$-th round of coloring, let
$\mathcal{N}_i\subseteq\mathcal{Q}_i$ be the set of rooted subtrees which do
not share colors with rooted subtrees in the set $\mathcal{P}_{i-1}[\{u,w\}]$,
i.e, $\mathcal{Q}_i\setminus\mathcal{N}_i$ is the largest subset of the set
$\mathcal{Q}_i$ such that
$|\psi^{\mathrm{GDY}}(\left(\mathcal{Q}_i\setminus\mathcal{N}_i\right)\cup\mathcal{P}_{i-1}[\{u,w\}])|=|\psi^{\mathrm{GDY}}(\mathcal{P}_{i-1}[\{u,w\}])|$.
We need $|\psi^{\mathrm{GDY}}(\mathcal{N}_i)|$ additional colors for coloring
all the rooted subtrees in the set $\mathcal{N}_i$ and there are
$|\psi^{\mathrm{GDY}}(\mathcal{P}_{i-1})|-|\psi^{\mathrm{GDY}}(\mathcal{P}_{i-1}[\{u,w\}])|$
available colors that can be used without adding any new color in the $i$-th
round of coloring. Therefore, the total number of colors required at the end of
$i$-th round of coloring in GREEDY-COL is
\begin{align}
&|\psi^{\mathrm{GDY}}(\mathcal{P}_i)|\nonumber\\
=&\bigg[|\psi^{\mathrm{GDY}}(\mathcal{N}_i)|-\Big(|\psi^{\mathrm{GDY}}(\mathcal{P}_{i-1})|\nonumber\\
&{}-|\psi^{\mathrm{GDY}}(\mathcal{P}_{i-1}[\{u,w\}])|\Big)\bigg]^++|\psi^{\mathrm{GDY}}(\mathcal{P}_{i-1})|\nonumber\\
=&\Big[|\psi^{\mathrm{GDY}}(\mathcal{N}_i)|+\big|\psi^{\mathrm{GDY}}(\left(\mathcal{Q}_i\setminus\mathcal{N}_i\right)\cup\mathcal{P}_{i-1}[\{u,w\}])\big|\nonumber\\
&{}-|\psi^{\mathrm{GDY}}(\mathcal{P}_{i-1})|\Big]^++|\psi^{\mathrm{GDY}}(\mathcal{P}_{i-1})|\nonumber\\
=&\Big[\big|\psi^{\mathrm{GDY}}(\mathcal{Q}_i\cup\mathcal{P}_{i-1}[\{u,w\}])\big|-|\psi^{\mathrm{GDY}}(\mathcal{P}_{i-1})|\Big]^+\nonumber\\
&{}+|\psi^{\mathrm{GDY}}(\mathcal{P}_{i-1})|\nonumber\\
=&\max\left\{\big|\psi^{\mathrm{GDY}}(\mathcal{Q}_i\cup\mathcal{P}_{i-1}[\{u,w\}])\big|,|\psi^{\mathrm{GDY}}(\mathcal{P}_{i-1})|\right\},\nonumber
\end{align}
where the third equality is due to the fact that the rooted subtrees
in the set $\mathcal{N}_i$ do not share any color with the rooted
subtrees in the set
$\left(\mathcal{Q}_i\setminus\mathcal{N}_i\right)\cup\mathcal{P}_{i-1}[\{u,w\}]$.
\end{proof}

Lemma \ref{lem:analysis8} suggests that we should develop bounds for
$|\psi^{\mathrm{GDY}}(\mathcal{Q}_i\cup\mathcal{P}_{i-1}[\{u,w\}])|$.
Using the notation of the lemma, if $\mathcal{N}_i\subseteq\mathcal{Q}_i$ is
the set of rooted subtrees that do not share colors with any rooted subtrees in
the set $\mathcal{P}_{i-1}[\{u,w\}]$, then
\begin{align}
&\big|\psi^{\mathrm{GDY}}(\mathcal{Q}_i\cup\mathcal{P}_{i-1}[\{u,w\}])\big|\nonumber\\
=&|\psi^{\mathrm{GDY}}(\mathcal{N}_i)|+\big|\psi^{\mathrm{GDY}}(\left(\mathcal{Q}_i\setminus\mathcal{N}_i\right)\cup\mathcal{P}_{i-1}[\{u,w\}])\big|\nonumber\\
=&|\psi^{\mathrm{GDY}}(\mathcal{N}_i)|+|\psi^{\mathrm{GDY}}(\mathcal{P}_{i-1}[\{u,w\}])|.\nonumber
\end{align}
Hence, in order to limit the use of new colors in the $i$-th round of coloring,
we try to minimize $|\psi^{\mathrm{GDY}}(\mathcal{N}_i)|$, the number of colors
used in the $i$-th round of coloring that are different from the colors
assigned to the rooted subtrees in the set $\mathcal{P}_{i-1}[\{u,w\}]$.

For any set $\mathcal{S}$ of rooted subtrees on the given host tree $T$ such
that the complement of their conflict graph is bipartite, i.e.,
$\bar{G}_{\mathcal{S}}$ is bipartite, we denote the size of maximum matching
\cite[p.67]{bollobas_book98} in $\bar{G}_{\mathcal{S}}$ by
$m_{\mathcal{S}}$.

\begin{lemma}
\label{lem:analysis3}
If the edge $e_i=\{u,v\}$ is of type $4$ defined in Lemma \ref{lem:analysis1},
and PROCESS-EDGE-1 is used for coloring in the $i$-th round of GREEDY-COL, then
\begin{align}
&\big|\psi^{\mathrm{GDY}}(\mathcal{Q}_i\cup\mathcal{P}_{i-1}[\{u,w\}])\big|\nonumber\\
\leq&2l_{\mathcal{R}}+|\mathcal{Q}_i|-m_{\mathcal{R}[\{u,v\}]}+m_{\mathcal{P}_{i-1}[\{u,v\}]}.\nonumber
\end{align}
\end{lemma}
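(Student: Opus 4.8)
The plan is to bound $|\psi^{\mathrm{GDY}}(\mathcal{Q}_i\cup\mathcal{P}_{i-1}[\{u,w\}])|$ by accounting for exactly how PROCESS-EDGE-1 reuses colors. By the remark following Lemma \ref{lem:analysis8}, it suffices to bound $|\psi^{\mathrm{GDY}}(\mathcal{N}_i)|$, the number of colors used on $\mathcal{Q}_i$ that do not already appear on $\mathcal{P}_{i-1}[\{u,w\}]$, and then add $|\psi^{\mathrm{GDY}}(\mathcal{P}_{i-1}[\{u,w\}])| \le 2l_{\mathcal{R}}$ (which holds since all these subtrees sit on a single host edge, so there are at most $2l_{\mathcal{R}}$ of them). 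So the real target is
\begin{equation}
|\psi^{\mathrm{GDY}}(\mathcal{N}_i)| \le |\mathcal{Q}_i| - m_{\mathcal{R}[\{u,v\}]} + m_{\mathcal{P}_{i-1}[\{u,v\}]}. \nonumber
\end{equation}

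First I would unpack what the matching $M_{\bar B_1}$ buys us. The graph $B_1$ is built on vertex set $\mathcal{P}_{i-1}[\{u,v\}]\cup\mathcal{Q}_i$ (note by Lemma \ref{lem:analysis4}, for a type-$4$ edge the only previously colored subtrees relevant here are those on $\{u,v\}$ or $\{u,x\}$, and PROCESS-EDGE-1 works with those on $\{u,v\}$); $\bar B_1$ is bipartite by Lemma \ref{lem:analysis7} together with the extra edges added in lines \ref{line:coloredge1_2}--\ref{line:coloredge1_6}, which only remove pairs that cannot safely share a color. Each matched edge of $M_{\bar B_1}$ either (type (ii)) lets an uncolored subtree in $\mathcal{Q}_i$ inherit a color already present on $\mathcal{P}_{i-1}[\{u,v\}]$ — contributing nothing to $\mathcal{N}_i$ — or (type (i)) pairs two subtrees of $\mathcal{Q}_i$ that get a common color; unmatched subtrees of $\mathcal{Q}_i$ each get their own color (which may still be an old color, but we must assume the worst and count it toward $\mathcal{N}_i$). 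Hence $|\psi^{\mathrm{GDY}}(\mathcal{N}_i)|$ is at most the number of matched-pair classes plus the number of unmatched vertices of $\mathcal{Q}_i$ in $M_{\bar B_1}$, i.e. $|\mathcal{Q}_i| - (\text{number of matched edges incident to a }\mathcal{Q}_i\text{ vertex})$.

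Next I would convert that matching count into the claimed $m$-terms. Every matched edge of $M_{\bar B_1}$ has at least one endpoint in $\mathcal{Q}_i$; the only matched edges not incident to $\mathcal{Q}_i$ would have both endpoints in $\mathcal{P}_{i-1}[\{u,v\}]$, but such edges contribute nothing useful, so WLOG (by maximality and by the fact that any $\mathcal{P}$–$\mathcal{P}$ matched edge can be discarded without hurting the color count) we may take $|M_{\bar B_1}|$ itself as the number of matched edges touching $\mathcal{Q}_i$. Thus $|\psi^{\mathrm{GDY}}(\mathcal{N}_i)| \le |\mathcal{Q}_i| - |M_{\bar B_1}| = |\mathcal{Q}_i| - m_{\mathcal{P}_{i-1}[\{u,v\}]\cup\mathcal{Q}_i}$. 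It remains to relate $m_{\mathcal{P}_{i-1}[\{u,v\}]\cup\mathcal{Q}_i}$ to $m_{\mathcal{R}[\{u,v\}]} - m_{\mathcal{P}_{i-1}[\{u,v\}]}$. Here I would use that $\mathcal{R}[\{u,v\}]$ is the disjoint union of $\mathcal{P}_{i-1}[\{u,v\}]$ and $\mathcal{Q}_i$, that $\bar G_{\mathcal{R}[\{u,v\}]}$ restricted to either part is the corresponding $\bar B$, and a submodularity / matching-restriction inequality: a maximum matching of $\bar G_{\mathcal{R}[\{u,v\}]}$ uses at most $m_{\mathcal{P}_{i-1}[\{u,v\}]}$ edges inside $\mathcal{P}_{i-1}[\{u,v\}]$ and at most $m_{\mathcal{P}_{i-1}[\{u,v\}]\cup\mathcal{Q}_i}$ of the remaining edges (those with an endpoint in $\mathcal{Q}_i$), giving $m_{\mathcal{R}[\{u,v\}]} \le m_{\mathcal{P}_{i-1}[\{u,v\}]} + m_{\mathcal{P}_{i-1}[\{u,v\}]\cup\mathcal{Q}_i}$, i.e. $-m_{\mathcal{P}_{i-1}[\{u,v\}]\cup\mathcal{Q}_i} \le m_{\mathcal{P}_{i-1}[\{u,v\}]} - m_{\mathcal{R}[\{u,v\}]}$. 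Substituting back yields the bound.

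The main obstacle I expect is the last step: being careful that the extra edges inserted into $B_1$ in lines \ref{line:coloredge1_2}--\ref{line:coloredge1_6} (which make $\bar B_1$ a subgraph of $\bar G_{\mathcal{P}_{i-1}[\{u,v\}]\cup\mathcal{Q}_i}$, not equal to it) do not break the matching-restriction inequality — one must check these edge additions only ever \emph{help}, i.e. the "safely share a color" condition is exactly what is needed so that $\bar B_1$-matchings correspond to valid color identifications and the maximum $\bar B_1$-matching is no smaller than what the generic $\bar G$ bound would need. A secondary subtlety is justifying that unmatched $\mathcal{Q}_i$ subtrees contribute at most one new color each and that old colors freed up by the type-$4$ structure (the $|\psi^{\mathrm{GDY}}(\mathcal{P}_{i-1})| - |\psi^{\mathrm{GDY}}(\mathcal{P}_{i-1}[\{u,w\}])|$ colors from Lemma \ref{lem:analysis8}) are not double-counted here — but that is handled cleanly because Lemma \ref{lem:analysis8} already separates those colors out, so in this lemma we only bound $|\psi^{\mathrm{GDY}}(\mathcal{Q}_i\cup\mathcal{P}_{i-1}[\{u,w\}])|$ in isolation.
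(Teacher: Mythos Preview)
Your overall plan is the same as the paper's --- bound the colors on $\mathcal{Q}_i\cup\mathcal{P}_{i-1}[\{u,w\}]$ via the matching $M_{\bar B_1}$, then add at most $2l_{\mathcal R}$ for the $\{u,w\}$ side --- but the heart of the argument has a genuine gap. You write
\[
|\psi^{\mathrm{GDY}}(\mathcal{N}_i)| \;\le\; |\mathcal{Q}_i| - |M_{\bar B_1}| \;=\; |\mathcal{Q}_i| - m_{\mathcal{P}_{i-1}[\{u,v\}]\cup\mathcal{Q}_i}.
\]
The equality is false: $\bar B_1$ is a \emph{proper} subgraph of $\bar G_{\mathcal{P}_{i-1}[\{u,v\}]\cup\mathcal{Q}_i}$ (the extra edges added to $B_1$ in lines~\ref{line:coloredge1_2}--\ref{line:coloredge1_6} delete edges from its complement), so in general $|M_{\bar B_1}|\le m_{\mathcal{P}_{i-1}[\{u,v\}]\cup\mathcal{Q}_i}$, and the inequality goes the wrong way for you. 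Worse, your ``matching-restriction inequality''
\[
m_{\mathcal{R}[\{u,v\}]}\;\le\; m_{\mathcal{P}_{i-1}[\{u,v\}]}+m_{\mathcal{P}_{i-1}[\{u,v\}]\cup\mathcal{Q}_i}
\]
is vacuous, because $\mathcal{P}_{i-1}[\{u,v\}]\cup\mathcal{Q}_i=\mathcal{R}[\{u,v\}]$; it reduces to $0\le m_{\mathcal{P}_{i-1}[\{u,v\}]}$ and gives no information. You flag this step as an ``obstacle'' but do not supply the missing argument, and the hope that ``the edge additions only ever help'' is wrong --- adding edges to $B_1$ can only shrink $\bar B_1$ and hence only hurt $|M_{\bar B_1}|$.

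What the paper actually does at this point is construct, by hand, a matching in $\bar B_1$ starting from a maximum matching of $\bar G_{\mathcal{R}[\{u,v\}]}$: it removes every matched edge that fails to be in $\bar B_1$ (these are $\mathcal P$--$\mathcal P$ pairs with different colors, and $\mathcal Q$--$\mathcal P$ pairs where the $\mathcal P$ color is blocked), and then adds back the edge $\{\vec R,\vec S\}$ for every same-colored pair $\vec R,\vec S\in\mathcal{P}_{i-1}[\{u,v\}]$. A counting argument (for each removed type-(ii)/(iii) edge there is a corresponding added edge, at most two removals per addition) shows that the removed and added edges together number at most $m_{\mathcal{P}_{i-1}[\{u,v\}]}$, yielding the key lower bound
\[
|M_{\bar B_1}|\;\ge\; m_{\mathcal{R}[\{u,v\}]}-m_{\mathcal{P}_{i-1}[\{u,v\}]}.
\]
This is the step you are missing. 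Note also that the paper bounds $|\psi^{\mathrm{GDY}}(\mathcal{P}_{i-1}[\{u,v\}]\cup\mathcal{Q}_i)|\le |\mathcal{P}_{i-1}[\{u,v\}]\cup\mathcal{Q}_i|-|M_{\bar B_1}|$ rather than going through $\mathcal N_i$; this is important because type-(iii) matched edges (same-colored $\mathcal P$--$\mathcal P$ pairs) legitimately reduce the color count on $\mathcal{P}_{i-1}[\{u,v\}]\cup\mathcal{Q}_i$ but do \emph{not} reduce $|\psi^{\mathrm{GDY}}(\mathcal N_i)|$. Your ``WLOG discard $\mathcal P$--$\mathcal P$ edges'' move therefore throws away exactly the savings you need. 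Finally, the paper separately checks that color-sharing between $\mathcal{P}_{i-1}[\{u,v\}]$ and $\mathcal{P}_{i-1}[\{u,w\}]\setminus\mathcal{P}_{i-1}[\{u,v\}]$ does not break the bound; your sketch does not address this case.
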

\begin{proof}
In order to limit
$|\psi^{\mathrm{GDY}}(\mathcal{P}_{i-1}[\{u,w\}]\cup\mathcal{Q}_i)|-|\psi^{\mathrm{GDY}}(\mathcal{P}_{i-1}[\{u,w\}])|$,
PROCESS-EDGE-1 finds the maximum number of disjoint pairs $\vec{R},\vec{S}$ of
rooted subtrees such that one of the following is true:
\begin{itemize}
\item[(i)] Both $\vec{R},\vec{S}\in\mathcal{Q}_i$, and in this case they are
assigned the same (possibly new) color.
\item[(ii)] $\vec{R}\in\mathcal{Q}_i,\vec{S}\in\mathcal{P}_{i-1}[\{u,v\}]$, and in
this case $\vec{R}$ is assigned the same color as $\vec{S}$.
\end{itemize}
PROCESS-EDGE-1 finds such pairs of rooted subtrees by using graph $B_1$. Since
$\mathcal{P}_{i-1}[\{u,v\}]$ and $\mathcal{Q}_i$ partition the set
$\mathcal{R}[\{u,v\}]$, therefore by Lemma \ref{lem:analysis7}, graph
$\bar{G}_{\mathcal{P}_{i-1}[\{u,v\}]\cup\mathcal{Q}_i}$ is bipartite. This,
along with the fact that
$E_{G_{\mathcal{P}_{i-1}[\{u,v\}]\cup\mathcal{Q}_i}}\subseteq E_{B_1}$,
implies that $\bar{B}_1$ is also bipartite. Hence, it is easy to find a
maximum matching in $\bar{B}_1$. Let $M\subseteq E_{\bar{B}_1}$ be any matching
in $\bar{B}_1$. Observe that the edges are added to $B_1$ (lines
\ref{line:coloredge1_2}-\ref{line:coloredge1_6}) in such a way that if edge
$\{\vec{R},\vec{S}\}\in M$, then one of the following holds:
\begin{itemize}
\item[(i)] Both $\vec{R},\vec{S}\in\mathcal{Q}_i$.
\item[(ii)] $\vec{R}\in\mathcal{P}_{i-1},\vec{S}\in\mathcal{Q}_i$, and there is no
$\vec{U}\in\mathcal{P}_{i-1}$ such that $\vec{S},\vec{U}$ collide and
$\psi^{\mathrm{GDY}}(\vec{R})=\psi^{\mathrm{GDY}}(\vec{U})$.
\item[(iii)] Both $\vec{R},\vec{S}\in\mathcal{P}_{i-1}$ and
$\psi^{\mathrm{GDY}}(\vec{R})=\psi^{\mathrm{GDY}}(\vec{S})$.
\end{itemize}
So if edge $\{\vec{R},\vec{S}\}\in M$, then rooted subtrees $\vec{R}$ and
$\vec{S}$ can be assigned the same color. Note that the matched edges of type
(i) and (ii) correspond to the rooted subtree pairs of type (i) and (ii),
respectively. Matched edges of type (iii) simply list all the pairs of rooted
subtrees in the set $\mathcal{P}_{i-1}[\{u,v\}]$ that have already been
assigned the same colors. Since the number of edges of type (iii) is already
fixed, a maximum matching in $\bar{B}_1$ determines the maximum number of edges
of types (i) and (ii), i.e., it determines the maximum number of rooted subtree
pairs described above.

First assume that the rooted subtrees in the set $\mathcal{P}_{i-1}[\{u,v\}]$
do not share colors with any of the rooted subtree in the set
$\mathcal{P}_{i-1}[\{u,w\}]\setminus\mathcal{P}_{i-1}[\{u,v\}]$, although they
may share colors amongst themselves. As a consequence of Lemma
\ref{lem:analysis7}, more than two rooted subtrees in the set
$\mathcal{P}_{i-1}[\{u,v\}]$ cannot have the same color. Starting from any
maximum matching
$M_{\bar{G}_{\mathcal{P}_{i-1}[\{u,v\}]\cup\mathcal{Q}_i}}\subseteq E_{\bar{G}_{\mathcal{P}_{i-1}[\{u,v\}]\cup\mathcal{Q}_i}}$ in graph $\bar{G}_{\mathcal{P}_{i-1}[\{u,v\}]\cup\mathcal{Q}_i}$, we can construct a
matching $M\subseteq E_{\bar{B}_1}$ in graph $\bar{B}_1$ by first removing and
then adding the edges described next. We remove every matched edge
$\{\vec{R},\vec{S}\}\in M_{\bar{G}_{\mathcal{P}_{i-1}[\{u,v\}]\cup\mathcal{Q}_i}}$
for which one of the following is true:
\begin{itemize}
\item[(i)] Both $\vec{R},\vec{S}\in\mathcal{P}_{i-1}[\{u,v\}]$ such that
$\psi^{\mathrm{GDY}}(\vec{R})\neq\psi^{\mathrm{GDY}}(\vec{S})$, and there is no
rooted subtree $\vec{U}\in\mathcal{P}_{i-1}[\{u,v\}]$ such that
$\psi^{\mathrm{GDY}}(\vec{U})\notin\{\psi^{\mathrm{GDY}}(\vec{R}),\psi^{\mathrm{GDY}}(\vec{S})\}$.
\item[(ii)] Both $\vec{R},\vec{S}\in\mathcal{P}_{i-1}[\{u,v\}]$ such that
$\psi^{\mathrm{GDY}}(\vec{R})\neq\psi^{\mathrm{GDY}}(\vec{S})$, and there is a
rooted subtree $\vec{U}\in\mathcal{P}_{i-1}[\{u,v\}]$ such that
$\psi^{\mathrm{GDY}}(\vec{U})\in\{\psi^{\mathrm{GDY}}(\vec{R}),\psi^{\mathrm{GDY}}(\vec{S})\}$.
\item[(iii)] $\vec{R}\in\mathcal{Q}_i$, $\vec{S}\in\mathcal{P}_{i-1}[\{u,v\}]$,
and there is a rooted subtree $\vec{U}\in\mathcal{P}_{i-1}$ such that
$\psi^{\mathrm{GDY}}(\vec{U})=\psi^{\mathrm{GDY}}(\vec{S})$.
\end{itemize}
Consider rooted subtrees $\vec{R},\vec{S}\in\mathcal{P}_{i-1}[\{u,v\}]$ with
$\psi^{\mathrm{GDY}}(\vec{R})=\psi^{\mathrm{GDY}}(\vec{S})$. Since $M_{\bar{G}_{\mathcal{P}_{i-1}[\{u,v\}]\cup\mathcal{Q}_i}}$ is a maximum
matching in $\bar{G}_{\mathcal{P}_{i-1}[\{u,v\}]\cup\mathcal{Q}_i}$, either
edge $\{\vec{R},\vec{S}\}\in M_{\bar{G}_{\mathcal{P}_{i-1}[\{u,v\}]\cup\mathcal{Q}_i}}$,
or at least one of the rooted subtrees $\vec{R},\vec{S}$ is matched to some
other rooted subtree in
$M_{\bar{G}_{\mathcal{P}_{i-1}[\{u,v\}]\cup\mathcal{Q}_i}}$.\footnote{It may
happen that both the rooted subtrees $\vec{R},\vec{S}$ are matched to different
vertices in $M_{\bar{G}_{\mathcal{P}_{i-1}[\{u,v\}]\cup\mathcal{Q}_i}}$} In the
case when rooted subtrees $\vec{R},\vec{S}$ are not already matched to each
other in $M_{\bar{G}_{\mathcal{P}_{i-1}[\{u,v\}]\cup\mathcal{Q}_i}}$, the
edge(s) adjacent to $\vec{R}$ or $\vec{S}$ (or both) in
$M_{\bar{G}_{\mathcal{P}_{i-1}[\{u,v\}]\cup\mathcal{Q}_i}}$ is (are) either of
type (ii) or of type (iii) and is (are) therefore removed from the matching.
Hence, we can safely add edge $\{\vec{R},\vec{S}\}$ to the matching. Let the
set of removed edges of type (i), (ii) and (iii) be $E_{\mathrm{r(i)}}$,
$E_{\mathrm{r(ii)}}$ and $E_{\mathrm{r(iii)}}$, respectively, and the set of
added edges be $E_{\mathrm{a}}$. Observe that for every removed edge in the set
$E_{\mathrm{r(ii)}}\cup E_{\mathrm{r(iii)}}$, there is a corresponding edge in
the set $E_{\mathrm{a}}$ added to the matching such that for at most two
removed edges in the set $E_{\mathrm{r(ii)}}\cup E_{\mathrm{r(iii)}}$, the
corresponding added edge in the set $E_{\mathrm{a}}$ can be the same; therefore
$|E_{\mathrm{a}}|\geq\frac{1}{2}(|E_{\mathrm{r(ii)}}|+|E_{\mathrm{r(iii)}}|)$.
We can now lower bound the size of maximum matching
$M_{\bar{B}_1}\subseteq E_{\bar{B}_1}$ in graph $\bar{B}_1$ by the size of $M$,
a valid matching in the graph. $|M|$ is equal to
$|M_{\bar{G}_{\mathcal{P}_{i-1}[\{u,v\}]\cup\mathcal{Q}_i}}|=m_{\{T\mathcal{P}_{i-1}[\{u,v\}]\cup\mathcal{Q}_i\}}$
minus the number of edges removed plus the number of edges added. Thus
\begin{align}
\label{eq:ineq3a}
&|M_{\bar{B}_1}|\geq|M|\\
=&m_{\mathcal{P}_{i-1}[\{u,v\}]\cup\mathcal{Q}_i}-\left(|E_{\mathrm{r(i)}}|+|E_{\mathrm{r(ii)}}|+|E_{\mathrm{r(iii)}}|-|E_{\mathrm{a}}|\right)\nonumber\\
\geq&m_{\mathcal{P}_{i-1}[\{u,v\}]\cup\mathcal{Q}_i}-\left(|E_{\mathrm{r(i)}}|+|E_{\mathrm{a}}|\right)\nonumber\\
\geq&m_{\mathcal{P}_{i-1}[\{u,v\}]\cup\mathcal{Q}_i}-m_{\mathcal{P}_{i-1}[\{u,v\}]},\nonumber
\end{align}
where we are using the fact that $E_{\mathrm{a}}\cup E_{\mathrm{r(i)}}$, the
set of removed edges of type (i) and the set of added edges form a matching in
the bipartite graph $\bar{G}_{\mathcal{P}_{i-1}[\{u,v\}]}$. This is because
$E_{\mathrm{a}}\cup E_{\mathrm{r(i)}}\subseteq E_{\bar{G}_{\mathcal{P}_{i-1}[\{u,v\}]}}$,
and the end vertices of edges in the sets $E_{\mathrm{a}},E_{\mathrm{r(i)}}$
are distinct.

The vertex set $V_{\bar{B}_1}$ corresponds to all the rooted subtrees in the
set $\mathcal{P}_{i-1}[\{u,v\}]\cup\mathcal{Q}_i$, and an edge in matching
$M_{\bar{B}_1}$ determines two rooted subtrees which share their color after
this round of coloring. Therefore, using inequality (\ref{eq:ineq3a}) and the
fact that the subsets $\mathcal{P}_{i-1}[\{u,v\}]$ and $\mathcal{Q}_i$ partition
the set $\mathcal{R}[\{u,v\}]$,
\begin{align}
\label{eq:ineq3c}
&\big|\psi^{\mathrm{GDY}}(\mathcal{P}_{i-1}[\{u,v\}]\cup\mathcal{Q}_i)\big|\\
\leq&\big|\mathcal{P}_{i-1}[\{u,v\}]\cup\mathcal{Q}_i\big|-|M_{\bar{B}_1}|\nonumber\\
\leq&|\mathcal{P}_{i-1}[\{u,v\}]|+|\mathcal{Q}_i|-m_{\mathcal{P}_{i-1}[\{u,v\}]\cup\mathcal{Q}_i}
+m_{\mathcal{P}_{i-1}[\{u,v\}]}.\nonumber
\end{align}
Using inequality (\ref{eq:ineq3c})
\begin{align}
\label{eq:ineq3b}
&\big|\psi^{\mathrm{GDY}}(\mathcal{P}_{i-1}[\{u,w\}]\cup\mathcal{Q}_i)\big|\\
=&\big|\psi^{\mathrm{GDY}}(\mathcal{P}_{i-1}[\{u,w\}]\cup\mathcal{Q}_i)\big|-\big|\psi^{\mathrm{GDY}}(\mathcal{P}_{i-1}[\{u,v\}]\cup\mathcal{Q}_i)\big|\nonumber\\
&{}+\big|\psi^{\mathrm{GDY}}(\mathcal{P}_{i-1}[\{u,v\}]\cup\mathcal{Q}_i)\big|\nonumber\\
\leq&|\mathcal{P}_{i-1}[\{u,w\}]\setminus\mathcal{P}_{i-1}[\{u,v\}]|+|\mathcal{P}_{i-1}[\{u,v\}]|+|\mathcal{Q}_i|\nonumber\\
&{}-m_{\mathcal{P}_{i-1}[\{u,v\}]\cup\mathcal{Q}_i}+m_{\mathcal{P}_{i-1}[\{u,v\}]}\nonumber\\
\leq&2l_{\mathcal{R}}+|\mathcal{Q}_i|-m_{\mathcal{R}[\{u,v\}]}+m_{\mathcal{P}_{i-1}[\{u,v\}]}.\nonumber
\end{align}
The first inequality uses the fact that
$|\psi^{\mathrm{GDY}}(\mathcal{P}_{i-1}[\{u,w\}]\cup\mathcal{Q}_i)|-|\psi^{\mathrm{GDY}}(\mathcal{P}_{i-1}[\{u,v\}]\cup\mathcal{Q}_i)|$
is the number of colors used for coloring all the rooted subtrees in the set
$\mathcal{P}_{i-1}[\{u,w\}]\setminus\mathcal{P}_{i-1}[\{u,v\}]$ that are
different from the colors used for coloring rooted subtrees in the set
$\mathcal{P}_{i-1}[\{u,v\}]\cup\mathcal{Q}_i$; therefore, this number is upper
bounded by $|\mathcal{P}_{i-1}[\{u,w\}]\setminus\mathcal{P}_{i-1}[\{u,v\}]|$.
The final inequality uses the fact that the subsets
$\mathcal{P}_{i-1}[\{u,v\}]$ and
$\mathcal{P}_{i-1}[\{u,w\}]\setminus\mathcal{P}_{i-1}[\{u,v\}]$
partition the set $\mathcal{P}_{i-1}[\{u,w\}]=\mathcal{R}[\{u,w\}]$.

Next, suppose some rooted subtree $\vec{R}\in\mathcal{P}_{i-1}[\{u,v\}]$ shares
its color with another rooted subtree
$\vec{S}\in\mathcal{P}_{i-1}[\{u,w\}]\setminus\mathcal{P}_{i-1}[\{u,v\}]$.
In this case, the worst that can happen is that some rooted subtrees in the set
$\mathcal{Q}_i$, that could have shared color with rooted subtree $\vec{R}$,
can no longer do so since they collide with rooted subtree $\vec{S}$. Hence the
size of maximum matching $M_{\bar{B}_1}$ reduces by $1$. The unit reduction is
independent of the number of affected rooted subtrees in the set
$\mathcal{Q}_i$, since in $M_{\bar{B}_1}$ rooted subtree $\vec{R}$ can be
potentially matched to only one of them. On the other hand, the rooted subtrees
$\vec{R}\in\mathcal{P}_{i-1}[\{u,v\}],\vec{S}\in\mathcal{P}_{i-1}[\{u,w\}]\setminus\mathcal{P}_{i-1}[\{u,v\}]$
sharing color means that
$|\psi^{\mathrm{GDY}}(\mathcal{P}_{i-1}[\{u,w\}]\cup\mathcal{Q}_i)|-|\psi^{\mathrm{GDY}}(\mathcal{P}_{i-1}[\{u,v\}]\cup\mathcal{Q}_i)|$,
the number of colors used for assigning colors to all the rooted subtrees in
the set $\mathcal{P}_{i-1}[\{u,w\}]\setminus\mathcal{P}_{i-1}[\{u,v\}]$ that
are different from the colors used for assigning colors to the rooted subtrees
in the set $\mathcal{P}_{i-1}[\{u,v\}]\cup\mathcal{Q}_i$, also reduces by $1$.
Applying both the observations, we note that the final inequality in
(\ref{eq:ineq3b}) still holds.
\end{proof}

\begin{lemma}
\label{lem:analysis5}
If the edge $e_i=\{u,v\}$ is of type $4$ defined in Lemma \ref{lem:analysis1},
and PROCESS-EDGE-2 is used for coloring in the $i$-th round of GREEDY-COL, then
\begin{equation}
\big|\psi^{\mathrm{GDY}}(\mathcal{P}_{i-1}[\{u,w\}]\cup\mathcal{Q}_i)\big|\leq2l_{\{\vec{T}_H,\mathcal{R}\}}+\left[g-h\right]^+,\nonumber
\end{equation}
where
\begin{align}
g=&|\mathcal{Q}_i[\{u,x\}]|+|\mathcal{P}_{i-1}[\{u,x\}]\setminus\mathcal{P}_{i-1}[\{u,v\}]|-|\mathcal{Q}_i|,\nonumber\\
h=&\Bigg[|\mathcal{Q}_i[\{u,x\}]|+\frac{|\mathcal{P}_{i-1}[\{u,x\}]\setminus\mathcal{P}_{i-1}[\{u,v\}]|}{2}\nonumber\\
&{}+m_{\mathcal{R}[\{u,x\}]}-2l_{\mathcal{R}}\Bigg]^+.\nonumber
\end{align}
\end{lemma}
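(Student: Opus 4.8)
My plan is to follow the proof of Lemma~\ref{lem:analysis3} as closely as possible, the two differences being that PROCESS-EDGE-2 reuses colours along the edge $\{u,x\}$ instead of along $e_i=\{u,v\}$, and that after its matching phase it colours the subtrees of $\mathcal{Q}_i\setminus\mathcal{Q}_i[\{u,x\}]$ greedily, one at a time. Write $a=|\mathcal{Q}_i[\{u,x\}]|$, $b=|\mathcal{P}_{i-1}[\{u,x\}]\setminus\mathcal{P}_{i-1}[\{u,v\}]|$ and $\mathcal{A}=(\mathcal{P}_{i-1}[\{u,x\}]\setminus\mathcal{P}_{i-1}[\{u,v\}])\cup\mathcal{Q}_i[\{u,x\}]$; this $\mathcal{A}$ is the vertex set of $B_2$, $|\mathcal{A}|=a+b$, and since every subtree of $\mathcal{A}$ lies on $\{u,x\}$ and the instance has been normalised so that $|\mathcal{R}[\{u,x\}]|=2l_{\mathcal{R}}$, we have $\mathcal{A}\subseteq\mathcal{R}[\{u,x\}]$ with $|\mathcal{R}[\{u,x\}]\setminus\mathcal{A}|=2l_{\mathcal{R}}-a-b$.

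The first step is to bound $|M_{\bar{B}_2}|$ from below, mirroring the matching surgery of Lemma~\ref{lem:analysis3}. By Lemma~\ref{lem:analysis7} the graph $\bar{G}_{\mathcal{A}}$ is bipartite, and since $E_{G_{\mathcal{A}}}\subseteq E_{B_2}$ so is $\bar{B}_2$, so a maximum matching is indeed computed. Starting from a maximum matching of $\bar{G}_{\mathcal{A}}$ and performing the same removal/re-addition surgery as in Lemma~\ref{lem:analysis3} (this time applied to $\mathcal{P}_{i-1}[\{u,x\}]\setminus\mathcal{P}_{i-1}[\{u,v\}]$) produces a valid matching of $\bar{B}_2$, so $|M_{\bar{B}_2}|\ge m_{\mathcal{A}}-m_{\mathcal{P}_{i-1}[\{u,x\}]\setminus\mathcal{P}_{i-1}[\{u,v\}]}$. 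I would then use $m_{\mathcal{P}_{i-1}[\{u,x\}]\setminus\mathcal{P}_{i-1}[\{u,v\}]}\le b/2$ and, by deleting from a maximum matching of $\bar{G}_{\mathcal{R}[\{u,x\}]}$ all edges meeting $\mathcal{R}[\{u,x\}]\setminus\mathcal{A}$, $m_{\mathcal{A}}\ge m_{\mathcal{R}[\{u,x\}]}-(2l_{\mathcal{R}}-a-b)$; together with $|M_{\bar{B}_2}|\ge 0$ this gives exactly $|M_{\bar{B}_2}|\ge h$. Since each matched edge joins two subtrees assigned a common colour in round $i$, we obtain $|\psi^{\mathrm{GDY}}(\mathcal{A})|\le|\mathcal{A}|-h=a+b-h$.

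Next I would pass to the full set $\mathcal{P}_{i-1}[\{u,w\}]\cup\mathcal{Q}_i$. Put $\mathcal{D}=\mathcal{P}_{i-1}[\{u,w\}]\setminus(\mathcal{P}_{i-1}[\{u,x\}]\setminus\mathcal{P}_{i-1}[\{u,v\}])$, so that $\mathcal{P}_{i-1}[\{u,w\}]\cup\mathcal{Q}_i[\{u,x\}]=\mathcal{D}\sqcup\mathcal{A}$ and $|\mathcal{D}|=2l_{\mathcal{R}}-b$; since colour-sharing across a partition is at least the sum of the sharings inside the parts, $|\psi^{\mathrm{GDY}}(\mathcal{P}_{i-1}[\{u,w\}]\cup\mathcal{Q}_i[\{u,x\}])|\le|\psi^{\mathrm{GDY}}(\mathcal{D})|+|\psi^{\mathrm{GDY}}(\mathcal{A})|\le 2l_{\mathcal{R}}+a-h$. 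It then remains to add the subtrees of $\mathcal{Q}_i\setminus\mathcal{Q}_i[\{u,x\}]$. Here I would invoke Lemma~\ref{lem:analysis4} together with the structure of $T$ near $u$ and $v$: a subtree $\vec{R}\in\mathcal{Q}_i\setminus\mathcal{Q}_i[\{u,x\}]$ (present on $\{u,v\}$ but on neither $\{u,x\}$ nor $\{u,w\}$) collides among already-coloured subtrees only with subtrees of $\mathcal{P}_{i-1}[\{u,v\}]$ and among uncoloured ones only with subtrees of $\mathcal{Q}_i$, so when it is coloured it is forbidden at most $|\mathcal{P}_{i-1}[\{u,v\}]|+|\mathcal{Q}_i|-1=2l_{\mathcal{R}}-1$ colours and hence receives a colour in $\{1,\dots,2l_{\mathcal{R}}\}$. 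Feeding this ``the greedy phase never exceeds colour $2l_{\mathcal{R}}$'' fact into the bound above and splitting on whether $g\ge h$ (so that the slack $a-h$ plus the $|\mathcal{Q}_i|-a$ remaining subtrees, truncated by the supply of colours below $2l_{\mathcal{R}}$, add up to $g-h$) or $g<h$ (so that every remaining subtree merely reuses an old colour and the total stays at $2l_{\mathcal{R}}$) yields $|\psi^{\mathrm{GDY}}(\mathcal{P}_{i-1}[\{u,w\}]\cup\mathcal{Q}_i)|\le 2l_{\mathcal{R}}+[g-h]^{+}$. Finally, exactly as in the last paragraph of the proof of Lemma~\ref{lem:analysis3}, the possibility that some subtree of $\mathcal{P}_{i-1}[\{u,x\}]\setminus\mathcal{P}_{i-1}[\{u,v\}]$ already shares its colour with a subtree of $\mathcal{D}$ only helps, since each such coincidence lowers $|M_{\bar{B}_2}|$ and the $|\mathcal{D}|$-contribution by the same amount.

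The step I expect to be the main obstacle is this last one: the naive estimate ``$+1$ colour per subtree of $\mathcal{Q}_i\setminus\mathcal{Q}_i[\{u,x\}]$'' is hopelessly loose, so one must combine the matching bound of Step~1 with the fact that these subtrees stay within the first $2l_{\mathcal{R}}$ colours and, more delicately, argue that the colour classes already constructed are dense enough below $2l_{\mathcal{R}}$ that the number of genuinely new colours created in the greedy phase is at most $[g-h]^{+}$. It is precisely in reconciling these counts that both the term $m_{\mathcal{R}[\{u,x\}]}$ and the exact shape of $h$ enter.
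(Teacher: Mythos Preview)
Your lower bound $|M_{\bar{B}_2}|\ge h$ is obtained exactly as in the paper (the surgery on a maximum matching of $\bar G_{\mathcal A}$, the bound $m_{\mathcal P_{i-1}[\{u,x\}]\setminus\mathcal P_{i-1}[\{u,v\}]}\le b/2$, and the deletion argument giving $m_{\mathcal A}\ge m_{\mathcal R[\{u,x\}]}-(2l_{\mathcal R}-a-b)$). The divergence, and the gap, is in the second half.

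Your decomposition $\mathcal P_{i-1}[\{u,w\}]\cup\mathcal Q_i[\{u,x\}]=\mathcal D\sqcup\mathcal A$ with the crude estimate $|\psi^{\mathrm{GDY}}(\mathcal D)|\le|\mathcal D|$ only gives $2l_{\mathcal R}+a-h$, which is already \emph{larger} than the target $2l_{\mathcal R}+[g-h]^+$ (since $g=a+b-|\mathcal Q_i|\le a$), and adding $\mathcal Q_i\setminus\mathcal Q_i[\{u,x\}]$ cannot decrease the colour count. The attempted repair --- that each $\vec R\in\mathcal Q_i\setminus\mathcal Q_i[\{u,x\}]$ receives a colour in $\{1,\dots,2l_{\mathcal R}\}$ --- is correct as a statement but does not bound the total: the colours appearing in $\psi^{\mathrm{GDY}}(\mathcal P_{i-1}[\{u,w\}])$ were fixed in earlier rounds and may well exceed $2l_{\mathcal R}$, so ``the total stays at $2l_{\mathcal R}$'' does not follow. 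Knowing $|X|\le 2l_{\mathcal R}+a-h$ and $Y\subseteq\{1,\dots,2l_{\mathcal R}\}$ gives only $|X\cup Y|\le|X|+|\mathcal Q_i|-a\le 2l_{\mathcal R}+|\mathcal Q_i|-h$, which is far from the goal.

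What is missing is exactly what the paper supplies: an explicit accounting of colour \emph{reuse} between $\mathcal Q_i$ and $\mathcal P_{i-1}[\{u,w\}]\setminus\mathcal P_{i-1}[\{u,v\}]$. The paper decomposes as $\mathcal P_{i-1}[\{u,v\}]$ versus $\mathcal Q_i\cup(\mathcal P_{i-1}[\{u,w\}]\setminus\mathcal P_{i-1}[\{u,v\}])$, uses the identity $|\mathcal P_{i-1}[\{u,w\}]\setminus\mathcal P_{i-1}[\{u,v\}]|=|\mathcal Q_i|$, and introduces counters $z_1,z_2,z_3$: $z_1$ tracks colours of $\mathcal P_{i-1}[\{u,w\}]\setminus(\mathcal P_{i-1}[\{u,v\}]\cup\mathcal P_{i-1}[\{u,x\}])$ reused for $\mathcal Q_i[\{u,x\}]$, $z_2$ tracks colours of $\mathcal P_{i-1}[\{u,w\}]\setminus\mathcal P_{i-1}[\{u,v\}]$ reused for $\mathcal Q_i\setminus\mathcal Q_i[\{u,x\}]$, and $z_3$ counts internal sharing in $\mathcal P_{i-1}[\{u,w\}]\setminus\mathcal P_{i-1}[\{u,v\}]$. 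It is precisely the $z_2$ term --- lower-bounded by $\min\{|\mathcal Q_i\setminus\mathcal Q_i[\{u,x\}]|,\ |\psi^{\mathrm{GDY}}(\mathcal P_{i-1}[\{u,w\}]\setminus\mathcal P_{i-1}[\{u,v\}])|-t_2-z_1\}$ --- that encodes your ``dense enough below $2l_{\mathcal R}$'' intuition and converts the loose $a-h$ into $[g-h]^+$. Without making this reuse quantitative, the argument does not close.
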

\begin{proof}
Observe that $\mathcal{P}_{i-1}[\{u,v\}]$ and
$\mathcal{P}_{i-1}[\{u,w\}]\setminus\mathcal{P}_{i-1}[\{u,v\}]$ partition $\mathcal{P}_{i-1}[\{u,w\}]=\mathcal{R}[\{u,w\}]$, and
$\mathcal{P}_{i-1}[\{u,v\}]$ and $\mathcal{Q}_i$ partition
$\mathcal{R}[\{u,v\}]$. Since
$|\mathcal{R}[\{u,w\}]|=|\mathcal{R}[\{u,v\}]|=2l_{\mathcal{R}}$, we have
\begin{equation}
\label{eq:eq3}
|\mathcal{P}_{i-1}[\{u,w\}]\setminus\mathcal{P}_{i-1}[\{u,v\}]|=|\mathcal{Q}_i|.
\end{equation}
Since $\mathcal{Q}_i[\{u,x\}]$ and $\mathcal{Q}_i\setminus\mathcal{Q}_i[\{u,x\}]$
partition $\mathcal{Q}_i$, and
$\mathcal{P}_{i-1}[\{u,x\}]\setminus\mathcal{P}_{i-1}[\{u,v\}]$ and
$\mathcal{P}_{i-1}[\{u,w\}]\setminus\left(\mathcal{P}_{i-1}[\{u,v\}]\cup\mathcal{P}_{i-1}[\{u,x\}]\right)$
partition $\mathcal{P}_{i-1}[\{u,w\}]\setminus\mathcal{P}_{i-1}[\{u,v\}]$,
from equation (\ref{eq:eq3})
\begin{align}
\label{eq:four}
&\big|\mathcal{P}_{i-1}[\{u,w\}]\setminus\left(\mathcal{P}_{i-1}[\{u,v\}]\cup\mathcal{P}_{i-1}[\{u,x\}]\right)\big|\\
&{}+|\mathcal{P}_{i-1}[\{u,x\}]\setminus\mathcal{P}_{i-1}[\{u,v\}]|\nonumber\\
=&|\mathcal{P}_{i-1}[\{u,w\}]\setminus\mathcal{P}_{i-1}[\{u,v\}]|\nonumber\\
=&|\mathcal{Q}_i\setminus\mathcal{Q}_i[\{u,x\}]|+|\mathcal{Q}_i[\{u,x\}]|=|\mathcal{Q}_i|.\nonumber
\end{align}
PROCESS-EDGE-2 first finds the maximum number of disjoint pairs
$\vec{R},\vec{S}$ of rooted subtrees such that one of the following is true:
\begin{itemize}
\item[(i)] Both $\vec{R},\vec{S}\in\mathcal{Q}_i[\{u,x\}]$. In this case, both
$\vec{R}$ and $\vec{S}$ are assigned the same color (we shall specify exactly
which color is assigned in a moment).
\item[(ii)] $\vec{R}\in\mathcal{Q}_i[\{u,x\}]$ and
$\vec{S}\in\mathcal{P}_{i-1}[\{u,x\}]\setminus\mathcal{P}_{i-1}[\{u,v\}]$
such that $\vec{R}$ can be assigned the same color as $\vec{S}$. In this case
$\vec{R}$ is indeed assigned the same color as $\vec{S}$.
\end{itemize}
PROCESS-EDGE-2 finds such pairs of rooted subtrees by using graph $B_2$. Since
$\mathcal{P}_{i-1}[\{u,x\}]\setminus\mathcal{P}_{i-1}[\{u,v\}]$ and
$\mathcal{Q}_i[\{u,x\}]$ are disjoint subsets of $\mathcal{R}[\{u,x\}]$, by
Lemma \ref{lem:analysis7} the graph
$\bar{G}_{(\mathcal{P}_{i-1}[\{u,x\}]\setminus\mathcal{P}_{i-1}[\{u,v\}])\cup\mathcal{Q}_i[\{u,x\}]}$
is bipartite. This, along with the fact that
$E_{G_{(\mathcal{P}_{i-1}[\{u,x\}]\setminus\mathcal{P}_{i-1}[\{u,v\}])\cup\mathcal{Q}_i[\{u,x\}]}}\subseteq E_{B_2}$, implies that $\bar{B}_2$ is also bipartite. Hence, it is easy to find
a maximum matching in $\bar{B}_2$. Let $M\subseteq E_{\bar{B}_2}$ be any
matching in $\bar{B}_2$. Edges are added to $B_2$ in such a way that if edge
$\{\vec{R},\vec{S}\}\in M$, then one of the following holds:
\begin{itemize}
\item[(i)] Both $\vec{R},\vec{S}\in\mathcal{Q}_i[\{u,x\}]$.
\item[(ii)] $\vec{R}\in\mathcal{Q}_i[\{u,x\}],\vec{S}\in\mathcal{P}_{i-1}[\{u,x\}]\setminus\mathcal{P}_{i-1}[\{u,v\}]$,
and there is no $\vec{U}\in\mathcal{P}_{i-1}$ such that $\vec{R},\vec{U}$
collide and $\psi^{\mathrm{GDY}}(\vec{S})=\psi^{\mathrm{GDY}}(\vec{U})$.
\item[(iii)] Both
$\vec{R},\vec{S}\in\mathcal{P}_{i-1}[\{u,x\}]\setminus\mathcal{P}_{i-1}[\{u,v\}]$
and $\psi^{\mathrm{GDY}}(\vec{R})=\psi^{\mathrm{GDY}}(\vec{S})$.
\end{itemize}
So if edge $\{\vec{R},\vec{S}\}\in M$, then the rooted subtrees
$\vec{R},\vec{S}$ can be assigned the same color. Note that the matched edges
of type (i) and (ii) correspond to the rooted subtree pairs of type (i) and (ii),
respectively. Matched edges of type (iii) simply list all the pairs of rooted
subtrees in the set
$\mathcal{P}_{i-1}[\{u,x\}]\setminus\mathcal{P}_{i-1}[\{u,v\}]$ that have
already been colored. Since the number of edges of type (iii) is already fixed,
a maximum matching in $\bar{B}_2$ determines the maximum number of edges of
types (i) and (ii), i.e., it determines the maximum number of rooted subtree
pairs described above.

First, we assume that the rooted subtrees in the set
$\mathcal{P}_{i-1}[\{u,w\}]\setminus\mathcal{P}_{i-1}[\{u,v\}]$ do not share
colors with any rooted subtree in the set $\mathcal{P}_{i-1}[\{u,v\}]$,
although they may share colors amongst themselves. Let
$M_{\bar{B}_2}\subseteq E_{\bar{B}_2}$ be a maximum matching in $\bar{B}_2$.
Let the number of type (i), (ii) and (iii) edges in the matching be
$t_1,t_2,t_3$, respectively. In this case the size of the maximum matching in
$\bar{B}_2$ is lower bounded as
\begin{align}
\label{eq:one}
&|M_{\bar{B}_2}|=\sum_{j=1}^{3}t_j\\
\geq&m_{\mathcal{Q}_i[\{u,x\}]\cup\left(\mathcal{P}_{i-1}[\{u,x\}]\setminus\mathcal{P}_{i-1}[\{u,v\}]\right)}\nonumber\\
&{}-m_{\mathcal{P}_{i-1}[\{u,x\}]\setminus\mathcal{P}_{i-1}[\{u,v\}]}\nonumber\\
\geq&\bigg[m_{\mathcal{Q}_i[\{u,x\}]\cup\left(\mathcal{P}_{i-1}[\{u,x\}]\setminus\mathcal{P}_{i-1}[\{u,v\}]\right)}\nonumber\\
&{}-\frac{|\mathcal{P}_{i-1}[\{u,x\}]\setminus\mathcal{P}_{i-1}[\{u,v\}]|}{2}\bigg]^+,\nonumber
\end{align}
where
$m_{\mathcal{Q}_i[\{u,x\}]\cup\left(\mathcal{P}_{i-1}[\{u,x\}]\setminus\mathcal{P}_{i-1}[\{u,v\}]\right)}$
and $m_{\mathcal{P}_{i-1}[\{u,x\}]\setminus\mathcal{P}_{i-1}[\{u,v\}]}$
are the sizes of maximum matchings in the bipartite graphs
$\bar{G}_{\mathcal{Q}_i[\{u,x\}]\cup\left(\mathcal{P}_{i-1}[\{u,x\}]\setminus\mathcal{P}_{i-1}[\{u,v\}]\right)}$
and $\bar{G}_{\mathcal{P}_{i-1}[\{u,x\}]\setminus\mathcal{P}_{i-1}[\{u,v\}]}$,
respectively. The reasoning for the initial inequality follows exactly as the
reasoning for inequality (\ref{eq:ineq3a}) presented in the proof of Lemma
\ref{lem:analysis3}. For the final inequality, we use the facts that the size
of any matching in the bipartite graph
$\bar{G}_{\mathcal{P}_{i-1}[\{u,x\}]\setminus\mathcal{P}_{i-1}[\{u,v\}]}$
must be smaller than half of the size of its vertex set, and the size of a
matching cannot be negative. Since
$\bar{G}_{\mathcal{Q}_i[\{u,x\}]\cup\left(\mathcal{P}_{i-1}[\{u,x\}]\setminus\mathcal{P}_{i-1}[\{u,v\}]\right)}$
is a subgraph of $\bar{G}_{\mathcal{R}[\{u,x\}]}$ induced by the vertex set
corresponding to the rooted subtrees in the set
$\mathcal{Q}_i[\{u,x\}]\cup\left(\mathcal{P}_{i-1}[\{u,x\}]\setminus\mathcal{P}_{i-1}[\{u,v\}]\right)$,
if the size of a maximum matching in $\bar{G}_{\mathcal{R}[\{u,x\}]}$ is
$m_{\mathcal{R}[\{u,x\}]}$, then the size of a maximum matching in
$\bar{G}_{\mathcal{Q}_i[\{u,x\}]\cup\left(\mathcal{P}_{i-1}[\{u,x\}]\setminus\mathcal{P}_{i-1}[\{u,v\}]\right)}$
is bounded as
\begin{align}
\label{eq:two}
&m_{\mathcal{Q}_i[\{u,x\}]\cup\left(\mathcal{P}_{i-1}[\{u,x\}]\setminus\mathcal{P}_{i-1}[\{u,v\}]\right)}\\
\geq&\bigg[m_{\mathcal{R}[\{u,x\}]}-|\mathcal{R}[\{u,x\}]|+\big|\mathcal{Q}_i[\{u,x\}]\cup\nonumber\\
&\Big(\mathcal{P}_{i-1}[\{u,x\}]\setminus\mathcal{P}_{i-1}[\{u,v\}]\Big)\big|\bigg]^+\nonumber\\
=&\bigg[|\mathcal{Q}_i[\{u,x\}]|+|\mathcal{P}_{i-1}[\{u,x\}]\setminus\mathcal{P}_{i-1}[\{u,v\}]|\nonumber\\
&{}+m_{\mathcal{R}[\{u,x\}]}-2l_{\mathcal{R}}\bigg]^+.\nonumber
\end{align}
This is because if we consider a maximum matching
$M_{\bar{G}_{\mathcal{R}[\{u,x\}]}}\subseteq E_{\bar{G}_{\mathcal{R}[\{u,x\}]}}$
in the graph $\bar{G}_{\mathcal{R}[\{u,x\}]}$, any edge
$\{\vec{R},\vec{S}\}\in M_{\bar{G}_{\mathcal{R}[\{u,x\}]}}$ can be classified
into one of the following three types:
\begin{itemize}
\item[(i)] Both $\vec{R},\vec{S}\in\mathcal{Q}_i[\{u,x\}]\cup\left(\mathcal{P}_{i-1}[\{u,x\}]\setminus\mathcal{P}_{i-1}[\{u,v\}]\right)$.
\item[(ii)] Rooted subtree $\vec{R}\in\mathcal{Q}_i[\{u,x\}]\cup\left(\mathcal{P}_{i-1}[\{u,x\}]\setminus\mathcal{P}_{i-1}[\{u,v\}]\right)$
whereas rooted subtree
$\vec{S}\in\mathcal{R}[\{u,x\}]\setminus\left(\mathcal{Q}_i[\{u,x\}]\cup\left(\mathcal{P}_{i-1}[\{u,x\}]\setminus\mathcal{P}_{i-1}[\{u,v\}]\right)\right)$.
\item[(iii)] Both $\vec{R},\vec{S}\in\mathcal{R}[\{u,x\}]\setminus\left(\mathcal{Q}_i[\{u,x\}]\cup\left(\mathcal{P}_{i-1}[\{u,x\}]\setminus\mathcal{P}_{i-1}[\{u,v\}]\right)\right)$.
\end{itemize}
Let the set of edges of type (i), (ii) and (iii) be
$E_{\mathrm{(i)}},E_{\mathrm{(ii)}},E_{\mathrm{(iii)}}$,
respectively. Clearly, $E_{\mathrm{(i)}}$ is a valid matching in the graph
$\bar{G}_{\mathcal{Q}_i[\{u,x\}]\cup\left(\mathcal{P}_{i-1}[\{u,x\}]\setminus\mathcal{P}_{i-1}[\{u,v\}]\right)}$,
therefore a lower bound for $|E_{\mathrm{(i)}}|$ can be treated as
a lower bound for
$m_{\mathcal{Q}_i[\{u,x\}]\cup\left(\mathcal{P}_{i-1}[\{u,x\}]\setminus\mathcal{P}_{i-1}[\{u,v\}]\right)}$.
Also, since maximum matching $M_{\bar{G}_{\mathcal{R}[\{u,x\}]}}$ can be
partitioned into sets $E_{\mathrm{(i)}},E_{\mathrm{(ii)}},E_{\mathrm{(iii)}}$,
we get
\begin{align}
\label{eq:twoa}
&m_{\mathcal{Q}_i[\{u,x\}]\cup\left(\mathcal{P}_{i-1}[\{u,x\}]\setminus\mathcal{P}_{i-1}[\{u,v\}]\right)}\\
\geq&|E_{\mathrm{(i)}}|\geq m_{\mathcal{R}[\{u,x\}]}-|E_{\mathrm{(ii)}}|-|E_{\mathrm{(iii)}}|.\nonumber
\end{align}
Since an edge in the set $E_{\mathrm{(ii)}}$ requires one of the rooted
subtree, and an edge in the set $E_{\mathrm{(iii)}}$ requires both of the
rooted subtrees to be from the set
$\mathcal{R}[\{u,x\}]\setminus\left(\mathcal{Q}_i[\{u,x\}]\cup\left(\mathcal{P}_{i-1}[\{u,x\}]\setminus\mathcal{P}_{i-1}[\{u,v\}]\right)\right)$,
we have
\begin{align}
\label{eq:twob}
&\qquad\qquad|E_{\mathrm{(ii)}}|+2|E_{\mathrm{(iii)}}|\\
\leq&\big|\mathcal{R}[\{u,x\}]\setminus\left(\mathcal{Q}_i[\{u,x\}]\cup\left(\mathcal{P}_{i-1}[\{u,x\}]\setminus\mathcal{P}_{i-1}[\{u,v\}]\right)\right)\big|\nonumber\\
=&|\mathcal{R}[\{u,x\}]|-\big|\mathcal{Q}_i[\{u,x\}]\cup\left(\mathcal{P}_{i-1}[\{u,x\}]\setminus\mathcal{P}_{i-1}[\{u,v\}]\right)\big|.\nonumber
\end{align}
From inequalities (\ref{eq:twoa}), (\ref{eq:twob}) and the fact that the size
of a matching cannot be negative, we obtain the required inequality
(\ref{eq:two}). From equations (\ref{eq:one}) and (\ref{eq:two}),
\begin{align}
\label{eq:three}
&\qquad\qquad|M_{\bar{B}_2}|=\sum_{j=1}^{3}t_j\\
\geq&\bigg[\Big[|\mathcal{Q}_i[\{u,x\}]|+|\mathcal{P}_{i-1}[\{u,x\}]\setminus\mathcal{P}_{i-1}[\{u,v\}]|\nonumber\\
&{}+m_{\mathcal{R}[\{u,x\}]}-2l_{\mathcal{R}}\Big]^+\nonumber\\
&{}-\frac{|\mathcal{P}_{i-1}[\{u,x\}]\setminus\mathcal{P}_{i-1}[\{u,v\}]|}{2}\bigg]^+\nonumber\\
=&\Bigg[|\mathcal{Q}_i[\{u,x\}]|+\frac{|\mathcal{P}_{i-1}[\{u,x\}]\setminus\mathcal{P}_{i-1}[\{u,v\}]|}{2}\nonumber\\
&{}+m_{\mathcal{R}[\{u,x\}]}-2l_{\mathcal{R}}\Bigg]^+=h.\nonumber
\end{align}

PROCESS-EDGE-2 assigns colors to the uncolored rooted subtrees in the set
$\mathcal{Q}_i$ in the following order:
\begin{itemize}
\item[(i)] First, all matched pairs of rooted subtree in which one of the
rooted subtree is in the set $\mathcal{Q}_i[\{u,x\}]$ and the other is in
the set $\mathcal{P}_{i-1}[\{u,x\}]\setminus\mathcal{P}_{i-1}[\{u,v\}]$ are
considered. For every such matched pair, the uncolored rooted subtree is
assigned the same color as its matched colored partner. The number of such
rooted subtrees in the matching $M_{\bar{B}_2}$ is equal to $t_2$.
\item[(ii)] Next, the remaining rooted subtrees from the set
$\mathcal{Q}_i[\{u,x\}]$ are randomly selected one-at-a-time. If the selected
rooted subtree $\vec{R}$ was not matched, and if there is a color that has
already been used previously that can be safely assigned to $\vec{R}$, then
that color is used; otherwise, a new color is used. On the other hand, if the
selected rooted subtree $\vec{R}$ was matched to another rooted subtree
$\vec{S}$, then clearly $\vec{S}$ is also uncolored. In this case both
$\vec{R}$ and $\vec{S}$ are assigned the same color. Again, preference is given
to the colors that are already in use over the use of new colors. According
to Lemma \ref{lem:analysis4}, rooted subtrees in the set
$\mathcal{P}_{i-1}[\{u,w\}]\setminus\left(\mathcal{P}_{i-1}[\{u,v\}]\cup\mathcal{P}_{i-1}[\{u,x\}]\right)$
can never collide with any rooted subtree in the set $\mathcal{Q}_i$.
Therefore, any color used for rooted subtrees in the set
$\mathcal{P}_{i-1}[\{u,w\}]\setminus\left(\mathcal{P}_{i-1}[\{u,v\}]\cup\mathcal{P}_{i-1}[\{u,x\}]\right)$,
that is not used by any other rooted subtree in the set
$\mathcal{P}_{i-1}[\{u,x\}]\setminus\mathcal{P}_{i-1}[\{u,v\}]$, can be
assigned to any of the rooted subtrees in the set $\mathcal{Q}_i$. Let $z_1$ be
the number colors assigned to the rooted subtrees in the set
$\mathcal{P}_{i-1}[\{u,w\}]\setminus\left(\mathcal{P}_{i-1}[\{u,v\}]\cup\mathcal{P}_{i-1}[\{u,x\}]\right)$
that are reused for rooted subtrees in the set $\mathcal{Q}_i[\{u,x\}]$ during
this step of the subroutine. We can bound $z_1$ as
\begin{align}
\label{eq:i}
z_1&\geq\min\Big\{|\mathcal{Q}_i[\{u,x\}]|-t_1-t_2,\\
&|\psi^{\mathrm{GDY}}(\mathcal{P}_{i-1}[\{u,w\}]\setminus\mathcal{P}_{i-1}[\{u,v\}])|\nonumber\\
&{}-|\psi^{\mathrm{GDY}}(\mathcal{P}_{i-1}[\{u,x\}]\setminus\mathcal{P}_{i-1}[\{u,v\}])|\Big\}.\nonumber
\end{align}
The first term in $\min$ is the maximum number of colors required for coloring
all the rooted subtrees in the set $\mathcal{Q}_i[\{u,x\}]$ that remain
uncolored after step (i) described above. The second term is the number of
colors used for assigning colors to the rooted subtrees in the set
$\mathcal{P}_{i-1}[\{u,w\}]\setminus\left(\mathcal{P}_{i-1}[\{u,v\}]\cup\mathcal{P}_{i-1}[\{u,x\}]\right)$
that are not used for any rooted subtree in the set
$\mathcal{P}_{i-1}[\{u,x\}]\setminus\mathcal{P}_{i-1}[\{u,v\}]$.
\item[(iii)] Next, the remaining uncolored rooted subtrees (all the rooted
subtrees in the set $\mathcal{Q}_i\setminus\mathcal{Q}_i[\{u,x\}]$) are
assigned colors one-at-a-time. Again preference is given to the colors that are
already in use over the use of new colors. Since the rooted subtrees in the set
$\mathcal{Q}_i\setminus\mathcal{Q}_i[\{u,x\}]$ can never collide with any
rooted subtree in the set
$\mathcal{P}_{i-1}[\{u,w\}]\setminus\mathcal{P}_{i-1}[\{u,v\}]$, any color used
for rooted subtrees in the set
$\mathcal{P}_{i-1}[\{u,w\}]\setminus\mathcal{P}_{i-1}[\{u,v\}]$ that has not
yet been reused for any rooted subtree in the set $\mathcal{Q}_i[\{u,x\}]$, can
be assigned to any of the rooted subtrees in the set
$\mathcal{Q}_i\setminus\mathcal{Q}_i[\{u,x\}]$. Let $z_2$ be the number of
colors assigned to the rooted subtrees in the set
$\mathcal{P}_{i-1}[\{u,w\}]\setminus\mathcal{P}_{i-1}[\{u,v\}]$ that are reused
for rooted subtrees in the set $\mathcal{Q}_i\setminus\mathcal{Q}_i[\{u,x\}]$
during this step. We can bound $z_2$ as
\begin{align}
\label{eq:ii}
z_2&\geq\min\Big\{|\mathcal{Q}_i\setminus\mathcal{Q}_i[\{u,x\}]|,\\
&|\psi^{\mathrm{GDY}}(\mathcal{P}_{i-1}[\{u,w\}]\setminus\mathcal{P}_{i-1}[\{u,v\}])|-t_2-z_1\Big\}.\nonumber
\end{align}
The first term in $\min$ is the maximum number of colors required for coloring
all the rooted subtrees in the set
$\mathcal{Q}_i\setminus\mathcal{Q}_i[\{u,x\}]$ and the second term is the
number of colors used for coloring the rooted subtrees in the set
$\mathcal{P}_{i-1}[\{u,w\}]\setminus\mathcal{P}_{i-1}[\{u,v\}]$ that have not
yet been reused in the first two steps.
\end{itemize}
Let $z_3$ be the number of colors used for coloring pairs of rooted subtrees in
the set
$\mathcal{P}_{i-1}[\{u,w\}]\setminus\left(\mathcal{P}_{i-1}[\{u,v\}]\cup\mathcal{P}_{i-1}[\{u,x\}]\right)$,
or to pairs of rooted subtrees where one of the rooted subtree belongs to the
set $\mathcal{P}_{i-1}[\{u,x\}]\setminus\mathcal{P}_{i-1}[\{u,v\}]$ and the
other belongs to the set
$\mathcal{P}_{i-1}[\{u,w\}]\setminus\left(\mathcal{P}_{i-1}[\{u,v\}]\cup\mathcal{P}_{i-1}[\{u,x\}]\right)$.
We can determine $z_3$ by subtracting the total number of colors used for
coloring all the rooted subtrees in the set
$\mathcal{P}_{i-1}[\{u,w\}]\setminus\mathcal{P}_{i-1}[\{u,v\}]$ from the sum of
the total number of colors used for coloring all the rooted subtrees in the set
$\mathcal{P}_{i-1}[\{u,x\}]\setminus\mathcal{P}_{i-1}[\{u,v\}]$ and the total
number of rooted subtrees in the set
$\mathcal{P}_{i-1}[\{u,w\}]\setminus\left(\mathcal{P}_{i-1}[\{u,v\}]\cup\mathcal{P}_{i-1}[\{u,x\}]\right)$.
Hence, using equation (\ref{eq:four}),
\begin{align}
\label{eq:iii}
z_3=&|\mathcal{P}_{i-1}[\{u,x\}]\setminus\mathcal{P}_{i-1}[\{u,v\}]|\\
&{}+\big|\mathcal{P}_{i-1}[\{u,w\}]\setminus\left(\mathcal{P}_{i-1}[\{u,v\}]\cup\mathcal{P}_{i-1}[\{u,x\}]\right)\big|\nonumber\\
&{}-t_3-|\psi^{\mathrm{GDY}}(\mathcal{P}_{i-1}[\{u,w\}]\setminus\mathcal{P}_{i-1}[\{u,v\}])|\nonumber\\
=&|\mathcal{Q}_i|-t_3-|\psi^{\mathrm{GDY}}(\mathcal{P}_{i-1}[\{u,w\}]\setminus\mathcal{P}_{i-1}[\{u,v\}])|.\nonumber
\end{align}

We note that the total number of colors required for assigning
colors to all the rooted subtrees in the set
$\mathcal{Q}_i\cup\left(\mathcal{P}_{i-1}[\{u,w\}]\setminus\mathcal{P}_{i-1}[\{u,v\}]\right)$
can be bounded as
\begin{align}
\label{eq:iv}
&\big|\psi^{\mathrm{GDY}}(\mathcal{Q}_i\cup\left(\mathcal{P}_{i-1}[\{u,w\}]\setminus\mathcal{P}_{i-1}[\{u,v\}]\right))\big|\\
=&|\mathcal{Q}_i\cup\left(\mathcal{P}_{i-1}[\{u,w\}]\setminus\mathcal{P}_{i-1}[\{u,v\}]\right)|-|M_{\bar{B}_2}|\nonumber\\
&{}-z_1-z_2-z_3\nonumber\\
\leq&\big|\mathcal{Q}_i\cup\left(\mathcal{P}_{i-1}[\{u,w\}]\setminus\mathcal{P}_{i-1}[\{u,v\}]\right)\big|-|\mathcal{Q}_i|\nonumber\\
&{}+\max\Big\{|\psi^{\mathrm{GDY}}(\mathcal{P}_{i-1}[\{u,w\}]\setminus\mathcal{P}_{i-1}[\{u,v\}])|\nonumber\\
&\phantom{{}+\max\Big\{}-|\mathcal{P}_{i-1}[\{u,w\}]\setminus\mathcal{P}_{i-1}[\{u,v\}]|,\nonumber\\
&\phantom{{}+\max\Big\{}|\psi^{\mathrm{GDY}}(\mathcal{P}_{i-1}[\{u,x\}]\setminus\mathcal{P}_{i-1}[\{u,v\}])|\nonumber\\
&\phantom{{}+\max\Big\{}-|\mathcal{Q}_i\setminus\mathcal{Q}_i[\{u,x\}]|-t_1-t_2,-t_1\Big\}\nonumber\\
\leq&|\mathcal{P}_{i-1}[\{u,w\}]\setminus\mathcal{P}_{i-1}[\{u,v\}]|+\Big[|\mathcal{P}_{i-1}[\{u,x\}]\nonumber\\
&{}\setminus\mathcal{P}_{i-1}[\{u,v\}]|-|\mathcal{Q}_i\setminus\mathcal{Q}_i[\{u,x\}]|-t_1-t_2-t_3\Big]^+\nonumber\\
\leq&|\mathcal{P}_{i-1}[\{u,w\}]\setminus\mathcal{P}_{i-1}[\{u,v\}]|+\left[g-h\right]^+.\nonumber
\end{align}
First inequality is obtained using equations (\ref{eq:four}), (\ref{eq:i}),
(\ref{eq:ii}), (\ref{eq:iii}) and the fact that $|M_{\bar{B}_2}|=t_1+t_2+t_3$.
For getting the second inequality we again use equation (\ref{eq:four}) along
with the facts that the sets $\mathcal{Q}_i$ and
$\mathcal{P}_{i-1}[\{u,w\}]\setminus\mathcal{P}_{i-1}[\{u,v\}]$ are mutually
exclusive, and that the first and the third terms in $\max$ are always less
than or equal to zero and in the second term
$|\psi^{\mathrm{GDY}}(\mathcal{P}_{i-1}[\{u,x\}]\setminus\mathcal{P}_{i-1}[\{u,v\}])|=|\mathcal{P}_{i-1}[\{u,x\}]\setminus\mathcal{P}_{i-1}[\{u,v\}]|-t_3$.
Final inequality uses equations (\ref{eq:four}) and (\ref{eq:three}). Using inequality (\ref{eq:iv})
\begin{align}
\label{eq:v}
&\big|\psi^{\mathrm{GDY}}(\mathcal{P}_{i-1}[\{u,w\}]\cup\mathcal{Q}_i)\big|\\
=&\big|\psi^{\mathrm{GDY}}(\mathcal{P}_{i-1}[\{u,w\}]\cup\mathcal{Q}_i)\big|\nonumber\\
&{}-\big|\psi^{\mathrm{GDY}}(\mathcal{Q}_i\cup\left(\mathcal{P}_{i-1}[\{u,w\}]\setminus\mathcal{P}_{i-1}[\{u,v\}]\right))\big|\nonumber\\
&{}+\big|\psi^{\mathrm{GDY}}(\mathcal{Q}_i\cup\left(\mathcal{P}_{i-1}[\{u,w\}]\setminus\mathcal{P}_{i-1}[\{u,v\}]\right))\big|\nonumber\\
\leq&|\mathcal{P}_{i-1}[\{u,v\}]|+|\mathcal{P}_{i-1}[\{u,w\}]\setminus\mathcal{P}_{i-1}[\{u,v\}]|\nonumber\\
&{}+\left[g-h\right]^+=2l_{\mathcal{R}}+\left[g-h\right]^+.\nonumber
\end{align}
The inequality uses the fact that since the number of colors used for coloring
all the rooted subtrees in the set $\mathcal{P}_{i-1}[\{u,v\}]$ that are
different from the colors used for coloring the rooted subtrees in the set
$\mathcal{Q}_i\cup\left(\mathcal{P}_{i-1}[\{u,w\}]\setminus\mathcal{P}_{i-1}[\{u,v\}]\right)$
is equal to
$|\psi^{\mathrm{GDY}}(\mathcal{P}_{i-1}[\{u,w\}]\cup\mathcal{Q}_i)|-|\psi^{\mathrm{GDY}}(\mathcal{Q}_i\cup\left(\mathcal{P}_{i-1}[\{u,w\}]\setminus\mathcal{P}_{i-1}[\{u,v\}]\right))|$, it is upper bounded by $|\mathcal{P}_{i-1}[\{u,v\}]|$. For the final equality,
we use the fact that the subsets $\mathcal{P}_{i-1}[\{u,v\}]$ and
$\mathcal{P}_{i-1}[\{u,w\}]\setminus\mathcal{P}_{i-1}[\{u,v\}]$ partition the
set $\mathcal{P}_{i-1}[\{u,w\}]=\mathcal{R}[\{u,w\}]$.

Suppose some rooted subtree
$\vec{R}\in\mathcal{P}_{i-1}[\{u,w\}]\setminus\mathcal{P}_{i-1}[\{u,v\}]$
shares its color with another rooted subtree
$\vec{S}\in\mathcal{P}_{i-1}[\{u,v\}]$. In this case, the worst that can happen
is that we may have to add a single new color for coloring all the rooted
subtrees in the set $\mathcal{Q}_i$. On the other hand, rooted subtrees
$\vec{R}\in\mathcal{P}_{i-1}[\{u,w\}]\setminus\mathcal{P}_{i-1}[\{u,v\}],\vec{S}\in\mathcal{P}_{i-1}[\{u,v\}]$
sharing a color means that
$|\psi^{\mathrm{GDY}}(\mathcal{P}_{i-1}[\{u,w\}]\cup\mathcal{Q}_i)|-|\psi^{\mathrm{GDY}}(\mathcal{Q}_i\cup\left(\mathcal{P}_{i-1}[\{u,w\}]\setminus\mathcal{P}_{i-1}[\{u,v\}]\right))|$,
the number of colors used for coloring all the rooted subtrees in the set
$\mathcal{P}_{i-1}[\{u,v\}]$ that are different from the colors used coloring
the rooted subtrees in the set
$\mathcal{Q}_i\cup\left(\mathcal{P}_{i-1}[\{u,w\}]\setminus\mathcal{P}_{i-1}[\{u,v\}]\right)$,
also reduces by $1$. Applying both the observations, we note that the
inequality in (\ref{eq:v}) still holds.
\end{proof}

\subsection{Approximation Ratio}
Using the bounds obtained in Lemmas \ref{lem:analysis2}, \ref{lem:analysis8},
\ref{lem:analysis3} and \ref{lem:analysis5}, we develop the approximation ratio
for GREEDY-COL in the form of a parameterized inequality in Lemma
\ref{lem:analysis6} and then in Lemma \ref{lem:analysis9}, using the valid
ranges of the parameters, we show that the ratio is bounded by $\frac{5}{2}$.

\begin{lemma}
\label{lem:analysis6}
Given a set of rooted subtrees $\mathcal{R}$ on a host tree $T$ of degree at
most $3$, the ratio of the number of colors used by the mapping
$\psi^{\mathrm{GDY}}$ generated by GREEDY-COL and the minimum number of colors
required for coloring all the rooted subtrees in the set $\mathcal{R}$
satisfies
\begin{equation}
\frac{|\psi^{\mathrm{GDY}}(\mathcal{R})|}{\min_{\psi\in\Psi_{\mathcal{R}}}|\psi(\mathcal{R})|}\leq\max_{\alpha,\beta,\gamma,\delta,\epsilon}\frac{2+\min\left\{f_1,\left[f_2-f_3\right]^+\right\}}{2-\min\left\{\beta,\gamma\right\}},\nonumber
\end{equation}
where
\begin{align}
f_1&=\alpha-\left[\beta+\frac{\alpha}{2}-1\right]^+,\nonumber\\
f_2&=\delta+\epsilon-\alpha,\nonumber\\
f_3&=\left[\delta+\frac{\epsilon}{2}+\gamma-2\right]^+,\nonumber
\end{align}
and the maximum is over $\alpha,\beta,\gamma,\delta,\epsilon$
satisfying
\begin{equation}
0\leq\beta,\gamma\leq1,\qquad
0\leq\delta,\epsilon\leq\alpha\leq2,\qquad
\delta+\epsilon\leq2.\nonumber
\end{equation}
\end{lemma}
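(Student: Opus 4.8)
The plan is to bound $|\psi^{\mathrm{GDY}}(\mathcal{R})|$ from above and $\min_{\psi\in\Psi_{\mathcal{R}}}|\psi(\mathcal{R})|$ from below, each as a multiple of $l:=l_{\mathcal{R}}$, and then to read the five parameters off a single ``worst'' round of GREEDY-COL; throughout I use the standing normalization $|\mathcal{R}[\{p,q\}]|=2l$ for every $\{p,q\}\in E_T$. For the upper bound I would first prove by induction on $i$ that $|\psi^{\mathrm{GDY}}(\mathcal{P}_i)|\le\max\{b_i,\,|\psi^{\mathrm{GDY}}(\mathcal{P}_{i-1})|\}$, where $b_i=2l$ if $e_i$ has type $1$, $2$ or $3$ (this is precisely Lemma~\ref{lem:analysis2}), and
\[
b_i=2l+\min\bigl\{\,|\mathcal{Q}_i|-m_{\mathcal{R}[\{u,v\}]}+m_{\mathcal{P}_{i-1}[\{u,v\}]},\;[g-h]^+\,\bigr\}
\]
($g,h$ as in Lemma~\ref{lem:analysis5}) if $e_i=\{u,v\}$ has type $4$. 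The type-$4$ case is where the two subroutines enter: the argument of Lemma~\ref{lem:analysis8} applies verbatim to either candidate coloring $\psi_j$ (both PROCESS-EDGE-1 and PROCESS-EDGE-2 prefer reusing colors), giving $|\psi_j(\mathcal{P}_i)|=\max\{|\psi_j(\mathcal{Q}_i\cup\mathcal{P}_{i-1}[\{u,w\}])|,\,|\psi^{\mathrm{GDY}}(\mathcal{P}_{i-1})|\}$; Lemma~\ref{lem:analysis3} bounds $|\psi_1(\mathcal{Q}_i\cup\mathcal{P}_{i-1}[\{u,w\}])|$ by $2l$ plus the first entry of the $\min$ and Lemma~\ref{lem:analysis5} bounds $|\psi_2(\mathcal{Q}_i\cup\mathcal{P}_{i-1}[\{u,w\}])|$ by $2l$ plus the second; and since GREEDY-COL keeps the $\psi_j$ that minimizes $|\psi_j(\mathcal{P}_i)|$ and the second argument of the $\max$ is the same for $j=1,2$, the smaller of the two bounds survives. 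Both entries of the $\min$ are nonnegative, so $b_i\ge2l$ for every $i$; unrolling the induction then yields $|\psi^{\mathrm{GDY}}(\mathcal{R})|\le\max_i b_i$, which equals $2l$ if no type-$4$ edge is ever processed and otherwise equals $b_i$ for some type-$4$ round $i$.

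For the lower bound, fix any $\{p,q\}\in E_T$. Since $G_{\mathcal{R}[\{p,q\}]}$ is an induced subgraph of $G_{\mathcal{R}}$, every valid coloring of $\mathcal{R}$ restricts to one of $\mathcal{R}[\{p,q\}]$, so $\min_\psi|\psi(\mathcal{R})|\ge\chi(G_{\mathcal{R}[\{p,q\}]})$. A proper coloring of $G_{\mathcal{R}[\{p,q\}]}$ is a cover of $\bar{G}_{\mathcal{R}[\{p,q\}]}$ by cliques, and by Lemma~\ref{lem:analysis7} that graph is bipartite with sides $\mathcal{R}[(p,q)]$ and $\mathcal{R}[(q,p)]$, each of size $l$, hence triangle-free, so every clique in it has at most two vertices; the $m_{\mathcal{R}[\{p,q\}]}$ edges of a maximum matching of $\bar{G}_{\mathcal{R}[\{p,q\}]}$ together with the $2l-2m_{\mathcal{R}[\{p,q\}]}$ unmatched vertices form a clique cover, and no cover can be smaller, so $\chi(G_{\mathcal{R}[\{p,q\}]})=2l-m_{\mathcal{R}[\{p,q\}]}$. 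Hence $\min_\psi|\psi(\mathcal{R})|\ge2l-m_{\mathcal{R}[\{p,q\}]}$ for every tree edge $\{p,q\}$.

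To conclude, if no type-$4$ edge is ever processed then $|\psi^{\mathrm{GDY}}(\mathcal{R})|\le2l\le2\min_\psi|\psi(\mathcal{R})|$, so the ratio is at most $2$, which does not exceed the right-hand side (evaluate it at $\alpha=2$, $\beta=\gamma=1$, $\delta=\epsilon=0$, giving $2/1$). Otherwise fix a type-$4$ round $i$ with $b_i=\max_j b_j$, write $e_i=\{u,v\}$ with $w,x$ the other neighbors of $u$ as in Lemma~\ref{lem:analysis1}, and set $\alpha=|\mathcal{Q}_i|/l$, $\beta=m_{\mathcal{R}[\{u,v\}]}/l$, $\gamma=m_{\mathcal{R}[\{u,x\}]}/l$, $\delta=|\mathcal{Q}_i[\{u,x\}]|/l$, and $\epsilon=|\mathcal{P}_{i-1}[\{u,x\}]\setminus\mathcal{P}_{i-1}[\{u,v\}]|/l$. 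I would check the constraints: $\alpha\le2$ and $\beta,\gamma\le1$ from the normalization and from bipartiteness of the relevant complement graphs; $\delta\le\alpha$ because $\mathcal{Q}_i[\{u,x\}]\subseteq\mathcal{Q}_i$; $\epsilon\le\alpha$ because a previously colored subtree present on $\{u,x\}$ must also be present on $\{u,w\}$, so $\mathcal{P}_{i-1}[\{u,x\}]\setminus\mathcal{P}_{i-1}[\{u,v\}]\subseteq\mathcal{P}_{i-1}[\{u,w\}]\setminus\mathcal{P}_{i-1}[\{u,v\}]$, whose size is $|\mathcal{Q}_i|$ by equation~(\ref{eq:eq3}); and $\delta+\epsilon\le2$ because $\mathcal{Q}_i[\{u,x\}]$ and $\mathcal{P}_{i-1}[\{u,x\}]\setminus\mathcal{P}_{i-1}[\{u,v\}]$ are disjoint subsets of $\mathcal{R}[\{u,x\}]$. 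A direct substitution turns the quantities $g$ and $h$ of Lemma~\ref{lem:analysis5} into $f_2\,l$ and $f_3\,l$. For the other entry of the $\min$, $\bar{G}_{\mathcal{P}_{i-1}[\{u,v\}]}$ is bipartite on $2l-|\mathcal{Q}_i|$ vertices, so $m_{\mathcal{P}_{i-1}[\{u,v\}]}\le l-\tfrac12|\mathcal{Q}_i|$; combined with $m_{\mathcal{P}_{i-1}[\{u,v\}]}\le m_{\mathcal{R}[\{u,v\}]}$ this gives $m_{\mathcal{R}[\{u,v\}]}-m_{\mathcal{P}_{i-1}[\{u,v\}]}\ge[\beta+\tfrac{\alpha}{2}-1]^+l$, hence $|\mathcal{Q}_i|-m_{\mathcal{R}[\{u,v\}]}+m_{\mathcal{P}_{i-1}[\{u,v\}]}\le f_1\,l$, so that $b_i\le(2+\min\{f_1,[f_2-f_3]^+\})\,l$. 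Applying the lower bound once to $\{u,v\}$ and once to $\{u,x\}$ gives $\min_\psi|\psi(\mathcal{R})|\ge(2-\min\{\beta,\gamma\})\,l>0$, so for this feasible choice of parameters the ratio is at most $(2+\min\{f_1,[f_2-f_3]^+\})/(2-\min\{\beta,\gamma\})$, which the maximum of that quantity over the feasible region dominates.

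The main obstacle is the bookkeeping in the last step: in particular, verifying that the two elementary matching inequalities for $m_{\mathcal{P}_{i-1}[\{u,v\}]}$ reproduce the bound $f_1$ exactly --- a short case split on the sign of $\beta+\tfrac{\alpha}{2}-1$ --- and checking the set containment behind $\epsilon\le\alpha$. The one conceptually delicate point is the reduction, inside the type-$4$ analysis, from ``GREEDY-COL minimizes the color count on $\mathcal{P}_i$'' to ``it effectively minimizes the count on $\mathcal{Q}_i\cup\mathcal{P}_{i-1}[\{u,w\}]$''; this works because, by Lemma~\ref{lem:analysis8}, the colors used outside that set are identical under $\psi_1$ and $\psi_2$, so the common second argument of the $\max$ commutes with the $\min$ over the two colorings.
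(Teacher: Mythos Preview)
Your proposal is correct and follows essentially the same route as the paper: the same induction from Lemmas~\ref{lem:analysis2}, \ref{lem:analysis8}, \ref{lem:analysis3}, \ref{lem:analysis5} for the upper bound, the same matching-based lower bound $2l-m_{\mathcal{R}[\{p,q\}]}$ on the optimum, and the same normalization of the five quantities at a maximizing type-$4$ round to obtain the parameters and their constraints. If anything, you are slightly more explicit than the paper on two points: you handle the no-type-$4$ case separately (the paper leaves the empty $\max$ implicit), and you justify why the $\min$ of the PROCESS-EDGE-1 and PROCESS-EDGE-2 bounds survives by applying the argument of Lemma~\ref{lem:analysis8} to each $\psi_j$ and using $\min_j\max\{a_j,c\}=\max\{\min_j a_j,c\}$, whereas the paper simply asserts equation~(\ref{eq:eq8}).
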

\begin{proof}
Lemmas \ref{lem:analysis2}, \ref{lem:analysis8}, \ref{lem:analysis3}, and
\ref{lem:analysis5} along with a straightforward induction argument gives that
the number of colors required by GREEDY-COL satisfy
\begin{equation}
\label{eq:eq8}
|\psi^{\mathrm{GDY}}(\mathcal{R})|\leq2l_{\mathcal{R}}+\max_{e_i\in E_T^{\mathrm{4}}}\min\left\{a_i,\left[g_i-h_i\right]^+\right\},
\end{equation}
where $E_T^{\mathrm{4}}\subseteq E_T$ is the set of all the host tree edges of
type $4$ as defined in Lemma \ref{lem:analysis1} encountered in GREEDY-COL and
\begin{align}
a_i&=|\mathcal{Q}_i|-\left(m_{\mathcal{R}[\{u,v\}]}-m_{\mathcal{P}_{i-1}[\{u,v\}]}\right),\nonumber\\
g_i&=|\mathcal{Q}_i[\{u,x\}]|+|\mathcal{P}_{i-1}[\{u,x\}]\setminus\mathcal{P}_{i-1}[\{u,v\}]|-|\mathcal{Q}_i|,\nonumber\\
h_i&=\Bigg[|\mathcal{Q}_i[\{u,x\}]|+\frac{|\mathcal{P}_{i-1}[\{u,x\}]\setminus\mathcal{P}_{i-1}[\{u,v\}]|}{2}\nonumber\\
&\phantom{=\Bigg[}{}+m_{\mathcal{R}[\{u,x\}]}-2l_{\mathcal{R}}\Bigg]^+.\nonumber
\end{align}
Here we follow the naming convention of Lemma \ref{lem:analysis1}, i.e.,
edge $e_i=\{u,v\}$ is the edge being processed in the $i$-th round of coloring
and edges $\{u,w\},\{u,x\}$ have the corresponding meanings as defined in Lemma
\ref{lem:analysis1} whenever $e_i=\{u,v\}$ is of type $4$.

Also, the minimum number of colors required for coloring all the rooted
subtrees in the set $\mathcal{R}$ can be lower bounded as
\begin{align}
\label{eq:eq9}
&\min_{\psi\in\Psi_{\{T,\mathcal{R}\}}}|\psi(\mathcal{R})|\\
\geq&\max_{\{a,b\}\in E_T}\min_{\psi\in\Psi_{\mathcal{R}[\{a,b\}]}}|\psi(\mathcal{R}[\{a,b\}])|\nonumber\\
=&2l_{\mathcal{R}}-\min_{\{a,b\}\in E_T}m_{\mathcal{R}[\{a,b\}]}.\nonumber
\end{align}
The inequality simply says that the number of colors required to color all the
rooted subtrees in $\mathcal{R}$ must be at least as much as the number of
colors required to color the subtrees on every host edge separately. The
equality is due to the fact that $\bar{G}_{\mathcal{R}[\{a,b\}]}$, the
complement of the conflict graph of rooted subtrees on host tree edge
$\{a,b\}$, is bipartite with the size of maximum matching being
$m_{\mathcal{R}[\{a,b\}]}$ and the size of the vertex set being
$|V_{\bar{G}_{\mathcal{R}[\{a,b\}]}}|=|\mathcal{R}[\{a,b\}]|=2l_{\{\vec{T}_H,\mathcal{R}\}}$.
From equations (\ref{eq:eq8}) and (\ref{eq:eq9}) we have
\begin{align}
\label{eq:eq10}
&\frac{|\psi^{\mathrm{GDY}}(\mathcal{R})|}{\min_{\psi\in\Psi_{\mathcal{R}}}|\psi(\mathcal{R})|}\\
\leq&\frac{2l_{\mathcal{R}}+\max_{e_i\in E_T^{\mathrm{4}}}\min\left\{a_i,\left[g_i-h_i\right]^+\right\}}
{2l_{\mathcal{R}}-\min_{\{a,b\}\in E_T}m_{\mathcal{R}[\{a,b\}]}}\nonumber\\
=&\max_{e_i\in E_T^{\mathrm{4}}}\left\{\frac{2l_{\mathcal{R}}+\min\left\{a_i,\left[g_i-h_i\right]^+\right\}}{2l_{\mathcal{R}}-\min_{\{a,b\}\in E_T}m_{\mathcal{R}[\{a,b\}]}}\right\}\nonumber\\
\leq&\max_{e_i\in E_T^{\mathrm{4}}}\left\{\frac{2l_{\mathcal{R}}+\min\left\{a_i,\left[g_i-h_i\right]^+\right\}}{2l_{\mathcal{R}}-\min\left\{m_{\mathcal{R}[\{u,v\}]},m_{\mathcal{R}[\{u,x\}]}\right\}}\right\}.\nonumber
\end{align}

Observe that for any host tree edge $e_i=\{u,v\}$ of type $4$ defined in Lemma
\ref{lem:analysis1}, the following hold.
\begin{itemize}
\item Since $\mathcal{Q}_i\subseteq\mathcal{R}[\{u,v\}]$,
\begin{equation}
|\mathcal{Q}_i|\leq|\mathcal{R}[\{u,v\}]|=2l_{\mathcal{R}}.\nonumber
\end{equation}
Let $|\mathcal{Q}_i|=\alpha_i l_{\mathcal{R}}$, where $\alpha_i$ is a constant
from the set $\left[0,2\right]$.
\item Since $m_{\mathcal{R}[\{u,v\}]}$ is the size of maximum matching in
graph $\bar{G}_{\mathcal{R}[\{u,v\}]}$,
\begin{equation}
m_{\mathcal{R}[\{u,v\}]}\leq\frac{|V_{\bar{G}_{\mathcal{R}[\{u,v\}]}}|}{2}=\frac{|\mathcal{R}[\{u,v\}]|}{2}=l_{\mathcal{R}}.\nonumber
\end{equation}
Let $m_{\mathcal{R}[\{u,v\}]}=\beta_i l_{\mathcal{R}}$, where $\beta_i$
is a constant from the set $\left[0,1\right]$.
\item $\mathcal{R}[\{u,v\}]$, the set of rooted subtrees present on edge
$\{u,v\}$, can be partitioned into $\mathcal{Q}_i$ and
$\mathcal{P}_{i-1}[\{u,v\}]$. Therefore
\begin{equation}
|\mathcal{P}_{i-1}[\{u,v\}]|=|\mathcal{R}[\{u,v\}]|-|\mathcal{Q}_i|=\left(2-\alpha_i\right)l_{\mathcal{R}}.\nonumber
\end{equation}
Since $m_{\mathcal{P}_{i-1}[\{u,v\}]}$ is the size of maximum matching in
graph $\bar{G}_{\mathcal{P}_{i-1}[\{u,v\}]}$, we have
\begin{align}
&m_{\mathcal{P}_{i-1}[\{u,v\}]}\leq\frac{|V_{\bar{G}_{\mathcal{P}_{i-1}[\{u,v\}]}}|}{2}\nonumber\\
=&\frac{|\mathcal{P}_{i-1}[\{u,v\}]|}{2}=\left(1-\frac{\alpha_i}{2}\right)l_{\mathcal{R}}.\nonumber
\end{align}
Also, since $\bar{G}_{\mathcal{P}_{i-1}[\{u,v\}]}$ is a subgraph of
$\bar{G}_{\mathcal{R}[\{u,v\}]}$, we have
\begin{equation}
m_{\mathcal{P}_{i-1}[\{u,v\}]}\leq m_{\mathcal{R}[\{u,v\}]}.\nonumber
\end{equation}
The above two inequalities imply that
\begin{equation}
m_{\mathcal{R}[\{u,v\}]}-m_{\mathcal{P}_{i-1}[\{u,v\}]}\geq\left[\beta_i+\frac{\alpha_i}{2}-1\right]^+l_{\mathcal{R}}.\nonumber
\end{equation}
\item Since $\mathcal{Q}_i[\{u,x\}]\subseteq\mathcal{Q}_i$,
\begin{equation}
|\mathcal{Q}_i[\{u,x\}]|\leq|\mathcal{Q}_i|=\alpha_i l_{\mathcal{R}}.\nonumber
\end{equation}
Let $|\mathcal{Q}_i[\{u,x\}]|=\delta_i l_{\mathcal{R}}$,
where $\delta_i$ is a constant from the set $\left[0,\alpha_i\right]$.
\item $\mathcal{P}_{i-1}[\{u,x\}]\setminus\mathcal{P}_{i-1}[\{u,v\}]$ and
$\mathcal{P}_{i-1}[\{u,v\}]$ are non-overlapping subsets of
$\mathcal{P}_{i-1}[\{u,w\}]=\mathcal{R}[\{u,w\}]$. Also, the set
$\mathcal{R}[\{u,v\}]$ can be partitioned into $\mathcal{Q}_i$ and
$\mathcal{P}_{i-1}[\{u,v\}]$. Therefore,
\begin{align}
&|\mathcal{P}_{i-1}[\{u,x\}]\setminus\mathcal{P}_{i-1}[\{u,v\}]|\nonumber\\
&\leq|\mathcal{R}[\{u,w\}]|-|\mathcal{P}_{i-1}[\{u,v\}]|\nonumber\\
&=|\mathcal{R}[\{u,v\}]|-|\mathcal{P}_{i-1}[\{u,v\}]|\nonumber\\
&=|\mathcal{Q}_i|=\alpha_i l_{\mathcal{R}}.\nonumber
\end{align}
Let
$|\mathcal{P}_{i-1}[\{u,x\}]\setminus\mathcal{P}_{i-1}[\{u,v\}]|=\epsilon_i l_{\mathcal{R}}$,
where $\epsilon_i$ is a constant from the set $\left[0,\alpha_i\right]$.
\item Sets $\mathcal{Q}_i[\{u,x\}]$ and
$\mathcal{P}_{i-1}[\{u,x\}]\setminus\mathcal{P}_{i-1}[\{u,v\}]$ are
non-overlapping subsets of $\mathcal{R}[\{u,x\}]$. Therefore,
\begin{equation}
|\mathcal{Q}_i[\{u,x\}]|+|\mathcal{P}_{i-1}[\{u,x\}]\setminus\mathcal{P}_{i-1}[\{u,v\}]|\leq|\mathcal{R}[\{u,x\}]|.\nonumber
\end{equation}
This implies that $\delta_i+\epsilon_i\leq2$.
\item Since $m_{\mathcal{R}[\{u,x\}]}$ is the size of maximum matching in
graph $\bar{G}_{\mathcal{R}[\{u,x\}]}$,
\begin{equation}
m_{\mathcal{R}[\{u,x\}]}\leq\frac{|V_{\bar{G}_{\mathcal{R}[\{u,x\}]}}|}{2}=\frac{|\mathcal{R}[\{u,x\}]|}{2}=l_{\mathcal{R}}.\nonumber
\end{equation}
Let $m_{\mathcal{R}[\{u,x\}]}=\gamma_i l_{\mathcal{R}}$, where $\gamma_i$
is a constant from the set $\left[0,1\right]$.
\end{itemize}
Combining the above, we get
\begin{align}
\label{eq:eq11}
a_i&\leq\left(\alpha_i-\left[\beta_i+\frac{\alpha_i}{2}-1\right]^+\right)l_{\mathcal{R}},\\
g_i&=\left(\delta_i+\frac{\epsilon_i}{2}-\alpha_i\right)l_{\mathcal{R}},\nonumber\\
h_i&=\left[\delta_i+\frac{\epsilon_i}{2}+\gamma_i-2\right]^+l_{\mathcal{R}},\nonumber
\end{align}
where $\alpha_i,\beta_i,\gamma_i,\delta_i,\epsilon_i$ are known
constants satisfying the following inequalities.
\begin{equation}
\label{eq:eq14} 0\leq\beta_i,\gamma_i\leq1,\qquad
0\leq\delta_i,\epsilon_i\leq\alpha_i\leq2,\qquad
\delta_i+\epsilon_i\leq2
\end{equation}
The lemma follows from equations (\ref{eq:eq10}), (\ref{eq:eq11}) and (\ref{eq:eq14}).
\end{proof}

\begin{lemma}
\label{lem:analysis9}
For any real $\alpha,\beta,\gamma,\delta$ and $\epsilon$ satisfying
\begin{equation}
0\leq\beta,\gamma\leq1,\qquad
0\leq\delta,\epsilon\leq\alpha\leq2,\qquad
\delta+\epsilon\leq2,\nonumber
\end{equation}
and functions $f_1,f_2,f_3$ given by
\begin{align}
f_1&=\alpha-\left[\beta+\frac{\alpha}{2}-1\right]^+,\nonumber\\
f_2&=\delta+\epsilon-\alpha,\nonumber\\
f_3&=\left[\delta+\frac{\epsilon}{2}+\gamma-2\right]^+,\nonumber
\end{align}
the following holds
\begin{equation}
\max_{\alpha,\beta,\gamma,\delta,\epsilon}\frac{2+\min\left\{f_1,\left[f_2-f_3\right]^+\right\}}{2-\min\left\{\beta,\gamma\right\}}\leq\frac{5}{2}.\nonumber
\end{equation}
\end{lemma}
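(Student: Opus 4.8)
The plan is to reduce the stated inequality to an elementary algebraic fact and then close it with two linear relaxations and a short argument by contradiction, with no case analysis over the polytope of feasible $(\alpha,\beta,\gamma,\delta,\epsilon)$. First, since $\mu:=\min\{\beta,\gamma\}\le 1$, the denominator $2-\min\{\beta,\gamma\}$ is strictly positive, so writing $M:=\min\{f_1,[f_2-f_3]^+\}$, the claim $\frac{2+M}{2-\mu}\le\frac{5}{2}$ is equivalent to $M\le 3-\frac{5}{2}\mu$. I will prove this reformulation.

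The whole proof then rests on two bounds that follow from nothing more than $[x]^+\ge x$. For $f_1$: $f_1=\alpha-[\beta+\frac{\alpha}{2}-1]^+\le \alpha-(\beta+\frac{\alpha}{2}-1)=1-\beta+\frac{\alpha}{2}$. For $f_2-f_3$: $f_2-f_3=(\delta+\epsilon-\alpha)-[\delta+\frac{\epsilon}{2}+\gamma-2]^+\le(\delta+\epsilon-\alpha)-(\delta+\frac{\epsilon}{2}+\gamma-2)=2-\alpha-\gamma+\frac{\epsilon}{2}\le 2-\gamma-\frac{\alpha}{2}$, where the last step uses $\epsilon\le\alpha$; hence $[f_2-f_3]^+\le[2-\gamma-\frac{\alpha}{2}]^+$.

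Now suppose, for contradiction, that $M>3-\frac{5}{2}\mu$, so that both $f_1>3-\frac{5}{2}\mu$ and $[f_2-f_3]^+>3-\frac{5}{2}\mu$. Since $\mu\le\gamma\le 1$, the second inequality combined with the bound above gives $[2-\gamma-\frac{\alpha}{2}]^+>3-\frac{5}{2}\gamma\ge\frac{1}{2}>0$, so $2-\gamma-\frac{\alpha}{2}>3-\frac{5}{2}\gamma$, which rearranges to $\alpha<3\gamma-2\le 1$. Feeding $\alpha<1$ into the first bound and using $\mu\le\beta\le 1$ yields $3-\frac{5}{2}\beta<f_1\le 1-\beta+\frac{\alpha}{2}<\frac{3}{2}-\beta$, i.e. $\beta>1$, contradicting $\beta\le 1$. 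Hence $M\le 3-\frac{5}{2}\mu$, which is exactly the reformulated inequality, and the lemma follows.

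The only real content is spotting the reduction and the right pair of linear relaxations of $f_1$ and $f_2-f_3$; once these are in hand the five-dimensional maximisation collapses immediately, so the ``hard part'' is conceptual rather than computational, and in particular no vertex enumeration of the constraint region is required. For completeness one may note that the constant $\frac{5}{2}$ is attained, e.g. at $\alpha=\beta=\gamma=\delta=\epsilon=1$, where $f_1=\frac{1}{2}$, $f_2-f_3=\frac{1}{2}$, and $\frac{2+\frac{1}{2}}{2-1}=\frac{5}{2}$, so no sharper bound is possible.
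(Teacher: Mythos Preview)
Your proof is correct and uses the same two linear relaxations as the paper, namely $f_1 \le 1-\beta+\tfrac{\alpha}{2}$ and $f_2-f_3 \le 2-\gamma-\tfrac{\alpha}{2}$ (the latter via $\epsilon\le\alpha$), together with the same pivotal threshold $\alpha=1$; the only difference is that the paper splits explicitly into the cases $\alpha\le 1$ (bounding the $f_1$ term) and $\alpha\ge 1$ (bounding the $[f_2-f_3]^+$ term), whereas you recover $\alpha<1$ inside a contradiction argument. The tightness example $\alpha=\beta=\gamma=\delta=\epsilon=1$ is also valid.
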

\begin{proof}
Note that for all permissible values of $\alpha,\beta,\gamma,\delta$
and $\epsilon$  we have the following.
\begin{align}
\label{eq:ineq}
&\frac{2+\min\left\{f_1,\left[f_2-f_3\right]^+\right\}}{2-\min\left\{\beta,\gamma\right\}}\\
=&\min\left\{\frac{2+f_1}{2-\min\left\{\beta,\gamma\right\}},\frac{2+[f_2-f_3]^+}{2-\min\left\{\beta,\gamma\right\}}\right\}\nonumber\\
\leq&\min\left\{\frac{2+f_1}{2-\beta},\frac{2+[f_2-f_3]^+}{2-\gamma}\right\}\nonumber
\end{align}
Next observe that for $0\leq\alpha,\beta\leq1$
\begin{align}
\label{eq:ineqa}
&\frac{2+f_1}{2-\beta}
=\frac{2+\alpha-\left[\beta+\frac{1}{2}\alpha-1\right]^+}{2-\beta}\\
=&\frac{2+\alpha-\max\left\{\beta+\frac{1}{2}\alpha-1,0\right\}}{2-\beta}\nonumber\\
=&\frac{\min\left\{3+\frac{1}{2}\alpha-\beta,2+\alpha\right\}}{2-\beta}
\leq\frac{3+\frac{1}{2}\alpha-\beta}{2-\beta}\leq\frac{5}{2}.\nonumber
\end{align}
In case $f_2\leq f_3$, using $0\leq\gamma\leq1$, we have
\begin{equation}
\label{eq:ineqc}
\frac{2+\left[f_2-f_3\right]^+}{2-\gamma}=\frac{2}{2-\gamma}\leq2.
\end{equation}
In case $f_2>f_3$, since $f_3\geq0$ and $f_2=\delta+\epsilon-\alpha>0$,
using $0\leq\gamma\leq1$ and $1\leq\alpha\leq2$, we get
\begin{align}
\label{eq:ineqb}
&\frac{2+\left[f_2-f_3\right]^+}{2-\gamma}=\frac{2+\delta+\epsilon-\alpha-\left[\delta+\frac{1}{2}\epsilon+\gamma-2\right]^+}{2-\gamma}\\
=&\frac{2+\delta+\epsilon-\alpha-\max\left\{\delta+\frac{1}{2}\epsilon+\gamma-2,0\right\}}{2-\gamma}\nonumber\\
=&\frac{\min\left\{2+\frac{1}{2}\epsilon-\alpha+2-\gamma,2+\frac{1}{2}\epsilon-\alpha+\delta+\frac{1}{2}\epsilon\right\}}{2-\gamma}\nonumber\\
=&\frac{2+\frac{1}{2}\epsilon-\alpha}{2-\gamma}+\min\left\{1,\frac{\delta+\frac{1}{2}\epsilon}{2-\gamma}\right\}\nonumber\\
\leq&\frac{2-\frac{1}{2}\alpha}{2-\gamma}+1\leq\frac{5}{2}.\nonumber
\end{align}

From equations (\ref{eq:ineq}), (\ref{eq:ineqa}), (\ref{eq:ineqc}), and
(\ref{eq:ineqb}) we get the required result.
\end{proof}

\begin{theorem}
\label{thm:greedycolor}
GREEDY-COL colors a given set of rooted subtrees on a host tree of degree at most
$3$ using at most $\frac{5}{2}$ times the minimum number of colors.
\end{theorem}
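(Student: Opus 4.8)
The plan is to simply assemble the machinery built up in the preceding lemmas; by the time we reach the theorem, essentially all the real work has been done. First I would note that, by Lemma~\ref{lem:analysis0}, it suffices to prove the bound for instances in which every directed host-tree edge carries exactly $l_{\mathcal{R}}$ rooted subtrees (so $|\mathcal{R}[\{u,v\}]|=2l_{\mathcal{R}}$ for every edge), since the transformation of Lemma~\ref{lem:analysis0} preserves the chromatic number of the conflict graph and runs in polynomial time, hence does not affect the approximation ratio. Under this normalization the lower bound $\min_{\psi\in\Psi_{\mathcal{R}}}|\psi(\mathcal{R})|\geq 2l_{\mathcal{R}}-\min_{\{a,b\}\in E_T}m_{\mathcal{R}[\{a,b\}]}$, used inside Lemma~\ref{lem:analysis6}, is valid.

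Second, I would run the induction on the rounds of GREEDY-COL that is asserted but not spelled out in Lemma~\ref{lem:analysis6}. The base case is $\mathcal{P}_0=\emptyset$. For the inductive step, Lemma~\ref{lem:analysis1} classifies the edge $e_i$ processed in round $i$ into one of four types, and Lemma~\ref{lem:analysis4} controls which previously coloured subtrees can interfere with $\mathcal{Q}_i$. If $e_i$ is of type $1$, $2$, or $3$, Lemma~\ref{lem:analysis2} gives $|\psi^{\mathrm{GDY}}(\mathcal{P}_i)|\leq\max\{2l_{\mathcal{R}},|\psi^{\mathrm{GDY}}(\mathcal{P}_{i-1})|\}$, so such rounds never push the running maximum above $2l_{\mathcal{R}}$. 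If $e_i$ is of type $4$, Lemma~\ref{lem:analysis8} reduces the count to $\max\{|\psi^{\mathrm{GDY}}(\mathcal{Q}_i\cup\mathcal{P}_{i-1}[\{u,w\}])|,|\psi^{\mathrm{GDY}}(\mathcal{P}_{i-1})|\}$, and since GREEDY-COL runs both PROCESS-EDGE-1 and PROCESS-EDGE-2 and retains whichever uses fewer colours, Lemmas~\ref{lem:analysis3} and~\ref{lem:analysis5} together yield $|\psi^{\mathrm{GDY}}(\mathcal{Q}_i\cup\mathcal{P}_{i-1}[\{u,w\}])|\leq 2l_{\mathcal{R}}+\min\{a_i,[g_i-h_i]^+\}$. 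Threading these three cases through the induction produces inequality~(\ref{eq:eq8}), i.e. $|\psi^{\mathrm{GDY}}(\mathcal{R})|\leq 2l_{\mathcal{R}}+\max_{e_i\in E_T^{4}}\min\{a_i,[g_i-h_i]^+\}$.

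Third, I would normalize: writing $|\mathcal{Q}_i|=\alpha_i l_{\mathcal{R}}$, $m_{\mathcal{R}[\{u,v\}]}=\beta_i l_{\mathcal{R}}$, $m_{\mathcal{R}[\{u,x\}]}=\gamma_i l_{\mathcal{R}}$, $|\mathcal{Q}_i[\{u,x\}]|=\delta_i l_{\mathcal{R}}$ and $|\mathcal{P}_{i-1}[\{u,x\}]\setminus\mathcal{P}_{i-1}[\{u,v\}]|=\epsilon_i l_{\mathcal{R}}$, the elementary set-size and matching-size facts collected in the proof of Lemma~\ref{lem:analysis6} confine the parameters to the polytope $0\leq\beta_i,\gamma_i\leq 1$, $0\leq\delta_i,\epsilon_i\leq\alpha_i\leq 2$, $\delta_i+\epsilon_i\leq 2$, and rewrite $a_i,g_i,h_i$ and the denominator $2l_{\mathcal{R}}-\min\{m_{\mathcal{R}[\{u,v\}]},m_{\mathcal{R}[\{u,x\}]}\}$ in terms of $f_1,f_2,f_3$. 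This is exactly Lemma~\ref{lem:analysis6}, which gives
\[ \frac{|\psi^{\mathrm{GDY}}(\mathcal{R})|}{\min_{\psi\in\Psi_{\mathcal{R}}}|\psi(\mathcal{R})|}\leq\max_{\alpha,\beta,\gamma,\delta,\epsilon}\frac{2+\min\{f_1,[f_2-f_3]^+\}}{2-\min\{\beta,\gamma\}}. \]
Finally, Lemma~\ref{lem:analysis9} bounds the right-hand side by $\frac{5}{2}$, which finishes the proof; I would also remark for completeness that GREEDY-COL is polynomial, since each of the $|E_T|$ rounds performs a bounded amount of bookkeeping plus one maximum-matching computation in a bipartite graph (bipartiteness being guaranteed by Lemma~\ref{lem:analysis7}).

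I do not expect a genuine obstacle here: the delicate parts are already isolated into Lemmas~\ref{lem:analysis3} and~\ref{lem:analysis5} (the matching-surgery arguments that bound the colours produced by the two subroutines, including the correction terms for the case when subtrees on $\{u,v\}$ share colours with subtrees on $\{u,w\}$) and into the calculus optimization of Lemma~\ref{lem:analysis9}. The only thing that needs care in this final step is the inductive bookkeeping behind~(\ref{eq:eq8}): verifying that in a type-$4$ round the quantities $a_i,g_i,h_i$ really are the ones governed by Lemmas~\ref{lem:analysis3} and~\ref{lem:analysis5} under the naming convention of Lemma~\ref{lem:analysis1}, and that type $1$--$3$ rounds (handled by Lemma~\ref{lem:analysis2}) never increase the running maximum, so that the $\max$ over type-$4$ edges in~(\ref{eq:eq8}) genuinely dominates the whole execution.
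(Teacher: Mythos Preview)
Your proposal is correct and follows exactly the paper's approach: the paper's own proof of Theorem~\ref{thm:greedycolor} is the single sentence ``The theorem follows from Lemmas~\ref{lem:analysis6} and~\ref{lem:analysis9},'' and what you have written is simply an expanded version of that, unpacking how Lemma~\ref{lem:analysis6} itself is assembled from Lemmas~\ref{lem:analysis0}--\ref{lem:analysis5} (material that in the paper sits inside the proof of Lemma~\ref{lem:analysis6} rather than here). No discrepancy.
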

\begin{proof}
The theorem follows from Lemmas \ref{lem:analysis6} and
\ref{lem:analysis9}.
\end{proof}

\section{Discussion and Concluding Remarks}
\label{sec:conclusion}
In this work, motivated by the problem of assigning wavelengths to multicast
traffic requests in all-optical WDM tree networks, we presented Algorithm
\ref{algo:greedycolor} (GREEDY-COL) for coloring a given set of rooted
subtrees of a given host tree with degree at most $3$ with the objective of
minimizing the total number of colors required. We proved that GREEDY-COL is a
$\frac{5}{2}$-approximation algorithm for the problem. Although, we did not
explicitly present the complexity analysis for GREEDY-COL in this paper, it
is straightforward to check that GREEDY-COL runs in polynomial time.

Although the problem is related to the problem of directed path coloring in
trees, the coloring strategy used for that problem is not directly applicable
here. An important difference between the two problems is that if a set of
directed paths collide on some host tree edge, then they must collide on every
host tree edge they share, whereas for rooted subtrees, it is possible for them
to be present on a host tree edge without colliding on that edge but still
collide on some other edge. The implication of this difference is that while in
the case of directed paths, the subproblem of coloring all the paths that share
a host tree vertex is equivalent to edge coloring in a bipartite graph, there
is no such simple equivalence in the case of rooted subtrees. Moreover, the
load of a set of directed paths, which is usually used as the lower bound on
the chromatic number of the corresponding conflict graph, is equal to the
clique number. This is not true in the case of rooted subtrees. In fact, the
lower bound that we employ to determine the approximation ratio for GREEDY-COL,
although better than the load of the set of the rooted subtrees, is still worse
than the clique number of the corresponding conflict graph. One possible approach
to prove a better approximation ratio would be to use the actual clique number of
the conflict graph corresponding to the set of rooted subtrees as the lower bound
for chromatic number.

\bibliography{references}

\end{document}